\newtheorem{theorem}{Theorem}
\newtheorem{example}{Example}
\newtheorem{definition}{Definition}
\newtheorem{lemma}{Lemma}
\newtheorem{remark}{Remark}
\begin{document}

\title{Discovering Small Target Sets in Social Networks:\\ A Fast and Effective Algorithm\thanks{A preliminary version of this paper was presented at the {\em 9th Annual International Conference on Combinatorial Optimization and Applications (COCOA'15)}, December 18-20, 2015,  Houston, Texas, USA.}
}

\author{ G. Cordasco \\ \small Department of  Psychology\\ \small Second University of Naples, Italy
 \and 
 L. Gargano \\ \small Department of Informatics\\ \small University of Salerno, Italy
\and
 M. Mecchia \\ \small Department of Informatics\\ \small University of Salerno, Italy
\and
 A. A. Rescigno \\ \small Department of Informatics\\ \small University of Salerno, Italy
\and   U. Vaccaro\\ \small Department of Informatics\\ \small University of Salerno, Italy}


\date{}

\maketitle

\def\d{{\delta}}
\newcommand{\remove}[1]{}
\newcommand{\N}{{\mathbb{N}}}
\newcommand{\Active}{{{\sf Active}}}
\newcommand{\No}{{\mathbb{N}}_0}


\begin{abstract}
Given a network represented by a graph  $G=(V,E)$, we consider a dynamical process of influence diffusion in $G$ that 
evolves as follows: Initially only the nodes of a given  $S\subseteq V$   are influenced; subsequently, at each round, the set of influenced nodes is augmented by all the nodes in the network  that have a sufficiently large number of already influenced neighbors.
The question is to determine a small subset of nodes $S$ (\emph{a target set}) that  can influence the whole network. 
This is a  widely studied problem that abstracts many phenomena in the social, economic, biological, and physical sciences.  
It is known  
that  the above optimization  problem  is  
hard to approximate within a factor of $2^{\log^{1-\epsilon}|V|}$, for any $\epsilon >0$.
In this paper, we present a  fast  and surprisingly simple algorithm  that exhibits the following features:
1)	when  applied to trees, cycles, or complete graphs,
it  always  produces an optimal solution (i.e, a minimum size target set);
2) when applied to  arbitrary networks, it  always produces a solution of cardinality 
{which improves on 
the previously known upper bound;   
}
3) when applied to real-life networks, it
always produces solutions that substantially outperform the ones obtained by previously published algorithms (for which no proof of optimality or performance guarantee is known in any class of graphs).
\end{abstract}

\newpage

\section{Introduction}

Social networks have been extensively investigated by student of the 
social science  for decades (see, e.g.,  \cite{WF}).
Modern large scale online social networks, like Facebook and LinkedIn,
have made available huge amount of data, thus leading
to many applications of online social networks, and also to  the articulation and 
exploration
of many interesting research questions.
A large part of such studies regards
the analysis of  social influence  diffusion  in  networks of people.
Social influence is the process by which individuals adjust their opinions, 
revise their beliefs, or change their behaviors as a
result of  interactions with other people \cite{CF}.
It has not escaped the attention of advertisers
that the  process of social influence
can be exploited in \emph{viral marketing} \cite{LAM}.
Viral marketing 
refers to the spread of information about products and  behaviors,
and their adoption by people. According to Lately~\cite{D}, ``\emph{the traditional broadcast model of 
advertising-one-way, one-to-many, read-only is increasingly being 
superseded by a vision of marketing that wants, and expects, consumers to spread the word themselves}''.
For what interests   us,  the intent of maximizing the spread of viral information across a network naturally
suggests many interesting  optimization  problems. Some of them  were first articulated in the seminal papers
\cite{KKT-03,KKT-05}.
The recent monograph \cite{BOOK2013} contains an excellent  
description of the area.
In the next section, we will explain and motivate our model of 
information diffusion, state the problem  we are investigating,
describe our results, and discuss how
they relate to the existing literature.
 
\subsection{The Model}

{Let $G = (V,E)$ be a graph modeling the network. We denote by $\Gamma_G(v)$ and  by $d_G(v)=|\Gamma_G(v)|$, respectively, the  neighborhood and the degree  of the vertex $v$ in $G$. Let $t: V \to \No=\{0,1, \ldots  \} $ be a
function assigning  thresholds to the vertices of $G$. For each node $v\in V$, the value 
 $t(v)$  quantifies  how hard it is to influence  node $v$, in the sense that
 easy-to-influence  elements of the network  have ``low'' threshold  values, and
hard-to-influence  elements have  ``high'' threshold  values \cite{Gr}.
}
\begin{definition} 
{Let   $G=(V,E)$  be a graph with threshold function  $t: V \longrightarrow \No$ and  $S\subseteq V$.
An {\em activation process in $G$ starting at $S$}
is a sequence of vertex subsets
$\Active_G[S,0] \subseteq \Active_G[S,1] \subseteq \ldots\subseteq \Active_G[S,\ell] \subseteq \ldots \subseteq V$
of vertex subsets, with} $\Active_G[S,0] = S$ and 
$$\Active_G[S,\ell] = \Active_G[S,\ell-1]\cup \Big\{u :\, \big|\Gamma_G(u)\cap \Active_G[S,\ell - 1]\big|\ge t(u)\Big\},
                                                      \mbox{ for  $\ell \geq 1$.}$$
A \textbf{ target set} for $G$ is  set $S\subseteq V$ such that $\Active_G[S,\lambda]=V$ for some  $\lambda\geq 0$
\end{definition}
In words, at each round  $\ell$ the set of active nodes is
augmented by the set of  nodes $u$ that have a number of
\emph{already} activated neighbors greater or equal to
 $u$'s threshold $t(u)$. 
{The vertex $v$ is said to be {\em  activated} at round $\ell>0$ if $v \in  \Active_G[S,\ell]\setminus \Active_G[S,\ell - 1]$.}

In the rest of the paper we will omit the subscript $G$  whenever the graph $G$  is clear
from the context.

\begin{example}\label{ex-1a}
Consider the tree $T$  in Figure \ref{fig1}. 
The number inside each circle is the  vertex threshold.
A possible   target set for    $T$   is 
$S=\{v_1, v_5, v_7\}$. Indeed we have 

$ 
 \Active[S,0]=S=\{v_1, v_5, v_7\},$

$ 
 \Active[S,1]=S\cup\{v_2,v_3,v_4,v_6,v_8,v_9\},$

 $\Active[S,2]=\Active[S,1]\cup \{   v_{10} \}   =V
$


\begin{figure}
	\centering
	\includegraphics[height=6truecm]{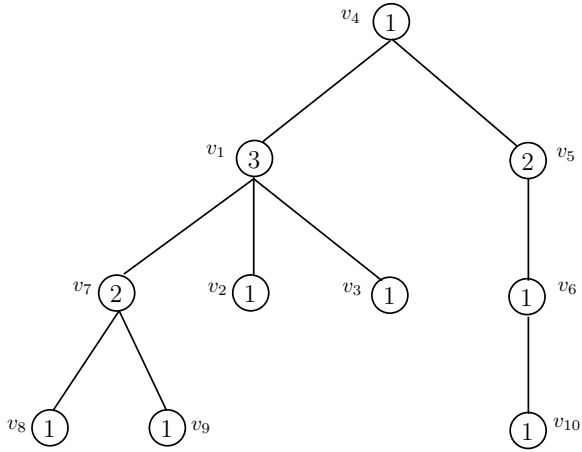}
\caption{A tree with vertex set $V=\{v_1, v_2,\ldots, v_{10}\}$ where 
the number inside each circle is the  vertex threshold.
A    target set       is 
$S=\{v_1, v_5, v_7\}$. \label{fig1}}
\end{figure}

\end{example}

\noindent
The  problem
we  study in this paper
 is defined as follows:
\begin{itemize}
\item[] {\sc Target Set Selection (TSS)}.\\
{\bf Instance:} A network $G=(V,E)$, thresholds $t:V\rightarrow \No$.\\
{\bf Problem:} Find a target set $S\subseteq V$ of \emph{minimum} size for $G$.
\end{itemize}

\subsection{The Context  and our  Results}
The Target Set Selection Problem has roots in the general study
of the \emph{spread of influence} in Social Networks (see  \cite{BOOK2013,EK} and references quoted therein).
For instance, in the area of viral marketing \cite{DR-01}, companies  wanting to
promote products or behaviors might  initially  try to target and convince
a few individuals who, by word-of-mouth, can  trigger
a  cascade of influence in the network leading to
an  adoption  of the products by  a much larger number of individuals.
Recently, viral marketing  has been also recognised as  an important tool in
 the communication strategies of politicians \cite{Bo+,LKLG,T}.

The first authors to study problems of spread of influence in networks
from an algorithmic point of view were Kempe \emph{et al.} \cite{KKT-03,KKT-05}.
However, they were mostly interested in networks with  randomly chosen thresholds.
Chen \cite{Chen-09} studied the following minimization problem:
Given a graph $G$ and fixed arbitrary thresholds $t(v)$, $\forall v\in V$, find
a target set of minimum size that eventually activates
all (or a fixed fraction of) nodes of $G$.
He proved  a strong   inapproximability result that makes unlikely the existence
of an  algorithm with  approximation factor better than  $O(2^{\log^{1-\epsilon }|V|})$.
Chen's result stimulated a series of papers  
\cite{ABW-10,BCNS,BHLM-11,Centeno12,Chopin-12,Chun,Chun2,Cic14,Cic+,C-OFKR,Ga+,NNUW,Re,Za} that isolated interesting cases 
in which the problem (and variants thereof) become tractable.
A notable absence from the literature on the topic (with the exception of \cite{SEP,DZNT})
are heuristics for the Target Set Selection Problem that work for \emph{general graphs}.
This is probably due to the previously quoted strong inapproximability result of Chen \cite{Chen-09},
that seems to suggest that the problem is hopeless.
Providing such an algorithm for general graphs, evaluating its  performances  and esperimentally validating it on real-life networks, is the main objective of this paper.

\subsubsection*{Our Results}

In this paper, we present a  fast  and  simple algorithm  
that exhibits the following features:\\
\textbf{1)}	It always  produces an optimal solution (i.e, a minimum size subset of nodes that influence the whole network) 
in case $G$ is either a tree,  a cycle, or a complete graph.
These results were previously obtained in \cite{Chen-09,NNUW}
by means of \emph{different ad-hoc} algorithms.\\
\textbf{2)}  { For general networks, it  always produces a target set  whose  cardinality
improves on 
the upper bound $ \sum_{v\in V} \min\left(1,\frac{t(v)}{d(v) +1}\right)$  derived  in  \cite{CGM+} and   obtained  in \cite{ABW-10}   by means of the probabilistic method;}\\
\textbf{3)} In real-life networks it produces solutions that outperform the ones obtained using
the algorithms presented  in the papers  \cite{SEP,DZNT}, 
  for which, however,  no proof of optimality or performance guarantee is known in any class of graphs.
	The data sets we use, to experimentally validate our algorithm, include those considered in \cite{SEP,DZNT}.
	
	It is worthwhile  to remark that our algorithm, when executed on a graph $G$ for which the 
	thresholds $t(v)$ have been set equal to the nodes 
	degree $d(v)$, for each $v\in V$, it outputs  a \emph{vertex cover} of $G$, 
	(since in that particular case a target set of $G$ 
	is, indeed, a vertex cover of $G$).
	Therefore, our algorithm appears to be a new algorithm, to the best of our 
	knowledge,
	to compute the vertex cover of graphs (notice that our algorithm differs from
 the classical algorithm that computes a vertex cover  by 
	iteratively deleting a vertex of maximum degree in the graph). We plan to  investigate elsewhere 
	the theoretical performances of our algorithm (i.e., its approximation factor);  computational 
	experiments  suggest that it performs surprisingly well in practice.
	
\section{The TSS algorithm}\label{sec:upper} 
In this section we present our algorithm 
 for the TSS problem.
{The algorithm, given in Figure \ref{fig2},  works by iteratively
 deleting  vertices from the input graph $G$. 
At each iteration, the vertex to be deleted  is  chosen as to
 maximize  a certain  function. During the deletion process,   some vertex $v$ in the surviving graph  may remain with less neighbors than its threshold; in such a case  $v$ is added to the target set     and deleted from the graph while  its neighbors' thresholds are decreased by 1 (since they receive $v$'s influence).
It can also happen that  the surviving graph contains a vertex $v$ whose threshold has been decreased down to 0 (which means that
the deleted nodes are able to activate $v$); 
in such a case $v$ is deleted from the graph
and its neighbors' thresholds are decreased by 1 (since  once $v$ activates,  they will receive $v$'s influence).}

\begin{figure}
\begin{center}
\begin{tabular}{l}
\hline
\hline
{\bf  Algorithm} TSS($G$)\\
{\bf Input:} A graph $G=(V,E)$ with thresholds $t(v)$ for $v\in V$.\\
{\bf \, 1.}  $S=\emptyset$ \\
{\bf \, 2.}   $U=V$ \\  
{\bf \, 3.}   {\bf for } each $v\in V $ {\bf do}  \\
{\bf \, 4.}  $\qquad\quad$  $\d( v)=d( v)$\\
{\bf \, 5.}  $\qquad\quad$  $k( v)=t( v)$ \\
{\bf \, 6.}  $\qquad\quad$  $N( v)=\Gamma( v)$ \\
{\bf \, 7.}  {\bf while} $U\neq \emptyset$  {\bf do}\\
{\bf \, 8.}  $\qquad$ \small{\em [Select one vertex and eliminate it from the graph as specified in the following  cases]}\\
{\bf \, 9.}  $\qquad$ {\bf if } there exists $v\in U$ s.t. $k(v)=0$ {\bf then} \\
{\bf 10.}  $\qquad\quad\quad$  \small{{\em [\underline{Case 1}: The  vertex  $v$ is activated by the influence of its neighbors in $V-U$ only; }}\\
{\bf 11.}  $\qquad\quad\quad$  \small{{\em \hfill it     can then  influence its neighbors in $U$]}}\\
{\bf 12.}  $\qquad\quad\quad$ {\bf for } each $u\in N(v)$ {\bf do} $k(u)=\max(k(u)-1,0)$\\ 
{\bf 13.}  $\qquad$ {\bf else } \\
{\bf 14.}   $\qquad\quad\quad$ {\bf if } there exists $v\in U$ s.t. $\d(v) <  k(v)$ 
                                    {\bf then} \\
{\bf 15.}  $\qquad\quad\quad\quad$ \small{{\em [\underline{Case 2}: The  vertex $v$ is added to $S$, since no sufficient  neighbors remain }}\\
{\bf 16.}  $\qquad\quad\quad\quad \qquad\qquad $ \hfill \small{{\em  in $U$ to activate it;   $v$  can then influence its neighbors in $U$]}}\\
{\bf 17.}  $\qquad\quad\quad\quad $ $S=S\cup\{v\}$ \\ 
{\bf 18.}  $\qquad\quad\quad\quad $ {\bf for } each $u\in N(v)$ {\bf do} $k(u)=k(u)-1$\\ 
{\bf 19.}  $\qquad\quad\quad$ {\bf else }\\
{\bf 20.}  $\qquad\quad\quad\quad $ \small{{\em [\underline{Case 3:} The  vertex $v$ will  be  influenced by some of its  neighbors in $U$]} }\\
{\bf 21.}  $\qquad\quad\quad\quad $ $v={\tt argmax}_{u\in U}\left\{\frac{k(u)}{\delta(u)(\delta(u)+1)}\right\}$ \\
{\bf 22.}  $\qquad$  \small{{\em [Remove the selected vertex $v$ from the graph] } } \\
{\bf 23.}  $\qquad$  {\bf for } each $u\in N(v)$ {\bf do}  \\
{\bf 24.}  $\qquad\qquad\qquad$   $\d(u)=\d(u)-1$\\
{\bf 25.}  $\qquad\qquad\qquad$   $N(u)=N(u)-\{v\}$\\
{\bf 26.}  $\qquad$ $U=U-\{v\}$\\
\hline
\hline
\end{tabular}
\smallskip
\caption{The TSS algorithm. \label{fig2}}
\end{center}
\end{figure}

\noindent
{\bf Example \ref{ex-1a}(cont.)}
{\em Consider the tree $T$  in Figure \ref{fig1}. 
A possible run of  the algorithm $TSS(T)$ removes the nodes from $T$ 
in the  order
of the list below, where  we also 
 indicate for each vertex which among Cases 1, 2, and 3 applies:}
\begin{center}
\begin{tabular}{lllllllllll}
Iteration  \quad & 1\quad & 2\quad & 3\quad & 4\quad & 5\quad & 6\quad & 7\quad & 8\quad & 9\quad & 10
\\
vertex  \quad & $v_{10}$ \quad & $v_9$ \quad & $v_8$ \quad & $v_7$ \quad & $v_6$ \quad & $v_5$ \quad & $v_4$ \quad & $v_3$ \quad & $v_2$ \quad & $v_1$
\\
Case  \quad & 3\quad & 3\quad & 3\quad & 2\quad & 3\quad & 2\quad & 1\quad & 3\quad & 3\quad & 2
 \end{tabular}
\end{center}
{\em Hence the algorithms outputs the set $\{v_1, v_5, v_7\}$ which is a target set for $T$.}

\smallskip
{Before starting the deletion process, the algorithm initialize three variables for each node:
\begin{itemize}
\item $\d(v)$  to the initial degree of node $v$,
\item $k(v)$ to the initial threshold of node $v$, and
\item $N(v)$  the initial set of neighbors of node $v$. 
\end{itemize}
}

In the rest of the paper, we use the following notation.
We denote by $n$ the number of nodes in $G$, that is,  $n=|V|$. Moreover we denote: 
\begin{itemize}
\item By $v_i$ the vertex that is selected during the $n-i+1$-th iteration of the while loop in TSS($G$), for $i=n,\ldots,1$;
\item  by $G(i)$ the graph induced by  $V_i=\{v_{i},\ldots,v_1\}$
\item by $\delta_i(v)$  the value of  $\delta(v)$  as updated at the beginning of the $(n-i+1)-th$ iteration of the while loop in TSS($G$). 
\item by $N_i(v)$  the set  $N(v)$  as updated at the beginning of the $(n-i+1)-th$ iteration of the while loop in TSS($G$), and  
\item by  $k_i(v)$ the value   of $k(v)$ as updated at the beginning of the $(n-i+1)-th$ iteration of the while loop in TSS($G$).
\end{itemize}
For the initial value $i=n$,  the above values are those of the input graph $G$, that is: $G(n)=G$, $\delta_n(v)=d(v)$, $N_n(v)=\Gamma(v)$, $k_n(v)=t(v)$, for each vertex $v$ of $G$.

\medskip
We start with  two technical Lemmata which will be useful in the rest of the paper.
\begin{lemma}\label{fact1}
Consider  a graph $G$. For any $i=n,\ldots,1$ and  
$u\in V_i$, it holds that
\begin{equation}\label{eq-deg}
\Gamma_{G(i)}(u)=N_i(u)  \quad \mbox{ and } \quad d_{G(i)}(u)=\delta_i(u).
\end{equation}
\end{lemma}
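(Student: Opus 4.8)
\emph{Proof plan.} I would prove (\ref{eq-deg}) by induction on the number of iterations of the \textbf{while} loop already completed, equivalently on $n-i$, running from $i=n$ down to $i=1$. The base case $i=n$ is immediate: before the loop starts, the initialization in lines 2--6, together with the conventions fixed just above, gives $G(n)=G$, $N_n(u)=\Gamma(u)$ and $\delta_n(u)=d(u)$ for every $u\in V_n=V$, which is exactly (\ref{eq-deg}) for $i=n$.

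For the inductive step I would assume (\ref{eq-deg}) holds at the start of the $(n-i)$-th iteration, i.e. for index $i+1$, and trace through what that iteration does. Exactly one vertex, namely $v_{i+1}$, is selected (via Case 1, 2, or 3), and then lines 22--26 remove it: for each $u\in N(v_{i+1})$ the algorithm decrements $\delta(u)$ and deletes $v_{i+1}$ from $N(u)$, and finally $U$ loses $v_{i+1}$. The observation I would spell out explicitly is that lines 9--21 modify only the variables $k(\cdot)$ and the set $S$, so the \emph{only} updates of $N(\cdot)$ and $\delta(\cdot)$ during the iteration are the ones in lines 23--25. Consequently, writing $V_i=V_{i+1}\setminus\{v_{i+1}\}$, for every $u\in V_i$ the values $N_i(u)$ and $\delta_i(u)$ are obtained from $N_{i+1}(u)$ and $\delta_{i+1}(u)$ by removing $v_{i+1}$ (resp.\ subtracting $1$) precisely when $u\in N_{i+1}(v_{i+1})$, and are left unchanged otherwise.

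On the graph side, since $V_i=V_{i+1}\setminus\{v_{i+1}\}$, the induced subgraph $G(i)$ is obtained from $G(i+1)$ by deleting the single vertex $v_{i+1}$; hence for every $u\in V_i$ one has $\Gamma_{G(i)}(u)=\Gamma_{G(i+1)}(u)\setminus\{v_{i+1}\}$, with $d_{G(i)}(u)=d_{G(i+1)}(u)-1$ if $v_{i+1}\in\Gamma_{G(i+1)}(u)$ and $d_{G(i)}(u)=d_{G(i+1)}(u)$ otherwise. I would then invoke the inductive hypothesis $\Gamma_{G(i+1)}(u)=N_{i+1}(u)$ and $d_{G(i+1)}(u)=\delta_{i+1}(u)$, together with symmetry of adjacency in $G(i+1)$ — which, through the hypothesis, also gives $u\in N_{i+1}(v_{i+1})\iff v_{i+1}\in N_{i+1}(u)$ — to match the graph-theoretic updates against the algorithmic ones case by case. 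This yields $\Gamma_{G(i)}(u)=N_i(u)$ and $d_{G(i)}(u)=\delta_i(u)$ for all $u\in V_i$, closing the induction.

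The argument is essentially bookkeeping, so I do not expect a genuine obstacle; the only point that needs a little care is to confirm that during one pass of the \textbf{while} loop exactly one vertex is deleted and that no line other than 23--25 touches $N(\cdot)$ or $\delta(\cdot)$ — this is what keeps the combinatorial vertex deletion $G(i)=G(i+1)-v_{i+1}$ in lock-step with the quantities the algorithm maintains.
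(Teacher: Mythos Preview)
Your proposal is correct and follows essentially the same approach as the paper: an induction (downward on $i$) with the base case given by the initialization, and the inductive step obtained by matching the algorithm's updates in lines 23--25 against the effect of deleting the single vertex $v_{i+1}$ from $G(i+1)$. Your write-up is in fact slightly more careful than the paper's, since you explicitly note that lines 9--21 do not touch $N(\cdot)$ or $\delta(\cdot)$ and you invoke the symmetry of adjacency to identify $u\in N_{i+1}(v_{i+1})$ with $v_{i+1}\in N_{i+1}(u)$.
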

\begin{proof}{}
For $i=n$ we have $d_{G(n)}(u)=d_G(u)=\delta_n(u)$ and 
$\Gamma_{G(n)}(u)=\Gamma_{G}(u)=N_n(u)$ for any $u\in V_n=V$. 
\\
Suppose now that the equalities  hold for 
some $i\leq n$. The graph $G(i-1)$ corresponds to the subgraph of $G(i)$ induced
 by $V_{i-1}=V_i-\{v_i\}$. Hence
$$
\Gamma_{G(i-1)}(u)=\Gamma_{G(i)}(u)-\{v_i\},
						$$
								and
	$$d_{G(i-1)}(u)=\begin{cases}{d_{G(i)}(u)-1}&{\mbox {if $u\in\Gamma_{G(i)}(v_i)$,}}\\
                              {d_{G(i)}(u)}&{\mbox {otherwise.}}\end{cases}$$													
We deduce that the desired equalities hold for $i-1$ by noticing that the algorithm uses the same rules to get 											
$$
N_{i-1}(u)=N_i(u)-\{v_i\}$$
and							
$$\delta_{i-1}(u)=\begin{cases}{\delta_{i}(u)-1}&{\mbox {if $u\in N_i(v_i)=\Gamma_{G(i)}(v_i),$}}\\
                              {\delta_{i}(u)}&{\mbox {otherwise}.}\end{cases}$$
\end{proof}

\begin{lemma}\label{lemmateo1}
For any $i>1$, if $S^{(i-1)}$ is a target set for $G(i-1)$ with thresholds   $k_{i-1}(u)$, for $u\in V_{i-1}$,
 then  
\begin{equation}\label{l2}
S^{(i)}=\begin{cases}{S^{(i-1)}\cup \{v_i\}}&{\mbox {if $k_i(v_i)>\delta_i(v_i)$}}\\
                              {S^{(i-1)}}&{\mbox {otherwise}}\end{cases}
\end{equation}
 is a target set for $G(i)$ with thresholds  $k_i(u)$, for $u\in V_{i}$.
\end{lemma}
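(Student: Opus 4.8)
The plan is to split into three cases according to which branch of the \texttt{while} loop selects $v_i$ during the $(n-i+1)$-th iteration, i.e.\ Case~1 ($k_i(v_i)=0$), Case~2 ($\delta_i(v_i)<k_i(v_i)$, which is exactly the condition $k_i(v_i)>\delta_i(v_i)$ in~\eqref{l2}), or Case~3 (the remaining possibility, in which every surviving vertex $w$, and in particular $v_i$, satisfies $\delta_i(w)\ge k_i(w)$). In all three cases $G(i)$ is obtained from $G(i-1)$ by adding back $v_i$ together with the edges joining it to $N_i(v_i)$, and by Lemma~\ref{fact1} we may identify $\Gamma_{G(i)}(u)$ with $N_i(u)$ and $d_{G(i)}(u)$ with $\delta_i(u)$; note also that $N_i(v_i)\subseteq V_{i-1}$ and $S^{(i-1)}\subseteq V_{i-1}$. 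Given the hypothesis that the activation process of $G(i-1)$ from $S^{(i-1)}$ with thresholds $k_{i-1}$ reaches all of $V_{i-1}$, I would run the activation process of $G(i)$ from $S^{(i)}$ with thresholds $k_i$ and show that it keeps pace with (a shifted copy of) the former process while also activating $v_i$.

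In Case~1 and Case~2 one has $S^{(i)}\supseteq S^{(i-1)}$ and $v_i$ becomes active immediately: in Case~1, $t_{G(i)}(v_i)=k_i(v_i)=0$, so $v_i\in\Active_{G(i)}[S^{(i)},1]$, while in Case~2, $v_i\in S^{(i)}=\Active_{G(i)}[S^{(i)},0]$. I would then prove, by induction on $\ell$, that
\[
\Active_{G(i-1)}[S^{(i-1)},\ell]\ \subseteq\ \Active_{G(i)}[S^{(i)},\ell+1]\qquad\text{for all }\ell\ge 0 .
\]
The base case holds because $S^{(i-1)}\subseteq S^{(i)}$. For the step, let $u$ be activated at round $\ell+1$ in $G(i-1)$, so it has at least $k_{i-1}(u)$ active neighbours there. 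If $u\notin N_i(v_i)$, then $k_i(u)=k_{i-1}(u)$ and $\Gamma_{G(i-1)}(u)=\Gamma_{G(i)}(u)$, so the induction hypothesis immediately puts $u$ into $\Active_{G(i)}[S^{(i)},\ell+2]$. If $u\in N_i(v_i)$, the algorithm set $k_{i-1}(u)=k_i(u)-1$ (Case~2) or $k_{i-1}(u)=\max(k_i(u)-1,0)$ (Case~1), hence $k_i(u)\le k_{i-1}(u)+1$; the induction hypothesis supplies $k_{i-1}(u)$ active neighbours of $u$ inside $V_{i-1}$, and the missing unit is contributed by $v_i$, which is a genuine extra neighbour ($v_i\notin\Gamma_{G(i-1)}(u)$) and is already active by round $\ell+1\ge1$. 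So $u\in\Active_{G(i)}[S^{(i)},\ell+2]$, completing the induction. Taking $\lambda$ with $\Active_{G(i-1)}[S^{(i-1)},\lambda]=V_{i-1}$ then gives $V_i=V_{i-1}\cup\{v_i\}\subseteq\Active_{G(i)}[S^{(i)},\lambda+1]$.

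In Case~3 the vertex $v_i$ is not ``free'': the algorithm modifies only $\delta$ and $N$, leaving $k_{i-1}(u)=k_i(u)$ for every $u\in V_{i-1}$, and $S^{(i)}=S^{(i-1)}$. Since $\Gamma_{G(i-1)}(u)\subseteq\Gamma_{G(i)}(u)$ and thresholds are unchanged, a routine monotonicity induction yields $\Active_{G(i-1)}[S^{(i-1)},\ell]\subseteq\Active_{G(i)}[S^{(i)},\ell]$ for all $\ell$ (extra neighbours can only help), so $V_{i-1}\subseteq\Active_{G(i)}[S^{(i)},\lambda]$. Because we are in the \emph{else} branch of line~14 we have $\delta_i(v_i)\ge k_i(v_i)$, and all $\delta_i(v_i)=|N_i(v_i)|$ neighbours of $v_i$ lie in $V_{i-1}$ and hence are active by round $\lambda$; therefore $v_i$ is activated at round $\lambda+1$, giving $V_i\subseteq\Active_{G(i)}[S^{(i)},\lambda+1]$.

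The step I expect to require most care is the threshold bookkeeping for the neighbours of $v_i$ in Cases~1 and~2: one must verify that the single unit by which $k(u)$ is decremented when $v_i$ is deleted is exactly offset by $v_i$ being active from round~$1$ (or $0$), and that no quantity becomes negative — in particular that $k_i(u)\ge1$ for every $u\in N_i(v_i)$ when Case~2 decrements it, which holds because Case~1 is tested first (line~9) and would otherwise have applied. The rest of the argument is standard monotonicity of the activation process.
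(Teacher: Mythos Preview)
Your proof is correct and follows essentially the same three-case approach as the paper. The only cosmetic differences are that you merge Cases~1 and~2 into a single argument (since in both $v_i$ is active by round~$1$) and you carry out the inductive comparison $\Active_{G(i-1)}[S^{(i-1)},\ell]\subseteq\Active_{G(i)}[S^{(i)},\ell+1]$ explicitly, whereas the paper is terser and, in Case~2, asserts the stronger round-by-round equality $\Active_{G(i)}[S^{(i)},\ell]=\Active_{G(i-1)}[S^{(i-1)},\ell]\cup\{v_i\}$ without spelling out the induction.
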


\begin{proof}
Let us first notice that, according to the algorithm $TSS$, for each $u\in V_{i-1}$ we have
\begin{equation}\label{l1}
k_{i-1}(u)=\begin{cases}{\max(k_i(u)-1,0)}&{\mbox {if $u\in N_i(v_i)$ and  ($k_i(v_i)=0$ or $k_i(v_i)>\delta_i(v_i)$)}}\\
                              {k_i(u)}&{\mbox {otherwise.}}\end{cases}
\end{equation}
\begin{itemize}
\item[1)]
If  $k_i(v_i)=0$, then   
$v_i\in \Active_{G(i)}[S^{(i)},1]$  whatever     $S^{(i)}\subseteq V_i-\{v_i\}$. Hence, by the equation  (\ref{l1}),
  any  target set $S^{(i-1)}$ for $G(i-1)$ is also a target set for $G(i)$.
\item[2)] If $k_i(v_i)>\delta_i(v_i)$ then  $S^{(i)}=S^{(i-1)}\cup\{v_i\}$  and $k_{i-1}(u)=k_i(u)-1$ for each  $u\in N_i(v_i)$. 
It follows that  for any $\ell\geq 0$,
$$\Active_{G(i)}[S^{(i-1)}\cup\{v_i\},\ell]-\{v_i\}=\Active_{G(i-1)}[S^{(i-1)},\ell].$$
Hence,
$\Active_{G(i)}[S^{(i)},\ell]=\Active_{G(i-1)}[S^{(i-1)},\ell]\cup\{v_i\}.$
\item[3)]
Let now $1\leq k_i(v_i)\leq \delta_i(v_i)$.   We have that  $k_{i-1}(u)=k_i(u)$ for each $u\in V_{i-1}$.
If  $S^{(i-1)}$ is a target set for $G(i-1)$,
by definition there exists an integer $\lambda$ such that $\Active_{G(i-1)}[S^{(i-1)},\lambda]=V_{i-1}$. We then have 
$ V_{i-1}\subseteq \Active_{G(i)}[S^{(i-1)},\lambda]$ which implies  $\Active_{G(i)}[S^{(i-1)},\lambda+1]=V_i$.
\end{itemize}
\end{proof}

We can now prove the main result of this section.

\begin{theorem}\label{teo1}
For any graph $G$ and threshold function $t$, the algorithm TSS($G$) outputs a target set for 
$G$.
\end{theorem}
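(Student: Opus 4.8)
The plan is to prove Theorem~\ref{teo1} by downward induction on $i$, showing that for every $i=1,\ldots,n$ the set $S^{(i)}$ constructed by the algorithm (restricted to the vertices deleted up to and including $v_i$) is a target set for $G(i)$ with thresholds $k_i(\cdot)$. Since $G(n)=G$, $k_n=t$, and the final $S^{(n)}$ is exactly the set $S$ returned by TSS($G$), this suffices.

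For the base case, take $i=1$: the graph $G(1)$ consists of the single vertex $v_1$. When $v_1$ was selected, it fell under Case~1, Case~2, or Case~3. In Case~1 we have $k_1(v_1)=0$, so $v_1$ activates immediately and $S^{(1)}=\emptyset$ is a target set. In Case~2 we have $\delta_1(v_1)<k_1(v_1)$, so $v_1$ is placed in $S$, giving $S^{(1)}=\{v_1\}=V_1$, trivially a target set. Case~3 cannot occur with a single remaining vertex unless $k_1(v_1)=0$ (because if $k_1(v_1)\ge 1$ then $\delta_1(v_1)=0<k_1(v_1)$ would trigger Case~2), so either $S^{(1)}=\emptyset$ suffices or $S^{(1)}=\{v_1\}$; in both cases $S^{(1)}$ is a target set for $G(1)$.

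For the inductive step, suppose $i>1$ and that $S^{(i-1)}$ is a target set for $G(i-1)$ with thresholds $k_{i-1}(\cdot)$. I then invoke Lemma~\ref{lemmateo1} directly: it states precisely that the set $S^{(i)}$ obtained from $S^{(i-1)}$ by adding $v_i$ exactly when $k_i(v_i)>\delta_i(v_i)$ is a target set for $G(i)$ with thresholds $k_i(\cdot)$. It remains only to check that this conditional matches what the algorithm actually does at the iteration where $v_i$ is deleted: if $v_i$ was removed under Case~2 then indeed $\delta_i(v_i)<k_i(v_i)$ and the algorithm adds $v_i$ to $S$, while if $v_i$ was removed under Case~1 ($k_i(v_i)=0\le\delta_i(v_i)$) or Case~3 ($k_i(v_i)\le\delta_i(v_i)$, since otherwise Case~2 would have applied), then $v_i$ is not added to $S$ — exactly the dichotomy in~(\ref{l2}). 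Hence the set maintained by the algorithm coincides with $S^{(i)}$ from the lemma, completing the induction.

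The conceptual work has essentially been front-loaded into Lemmata~\ref{fact1} and~\ref{lemmateo1}, so the main thing to be careful about here is bookkeeping: verifying that the three algorithmic cases partition into ``adds $v_i$'' versus ``does not add $v_i$'' in a way consistent with the hypothesis $k_i(v_i)>\delta_i(v_i)$ of Lemma~\ref{lemmateo1}, and that Lemma~\ref{fact1} legitimizes identifying $\delta_i(v_i)$ and $N_i(v_i)$ with the degree and neighborhood in $G(i)$ (which is what makes the ``$\delta_i(v_i)<k_i(v_i)$'' test in Case~2 the genuine condition ``$v_i$ has fewer surviving neighbors than its current threshold''). No single step is a serious obstacle; the only subtlety is confirming that when Case~3 selects $v_i$ we necessarily have $k_i(v_i)\le\delta_i(v_i)$, which follows because Cases~1 and~2 are tested first and their failure is exactly $k_i(v_i)\ge 1$ and $\delta_i(v_i)\ge k_i(v_i)$.
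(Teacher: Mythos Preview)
Your proof is correct and follows essentially the same approach as the paper: induction on $i$ with base case $i=1$, then invoking Lemma~\ref{lemmateo1} for the inductive step after checking that the algorithm's three cases align with the dichotomy in~(\ref{l2}). One small terminological slip: you call it ``downward induction on $i$'' but your argument actually runs upward from $i=1$ to $i=n$; the content is unaffected.
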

\begin{proof}
Let $S$ be the output of the algorithm $TSS(G)$. We  show that for each $i=1,\ldots,n$ the set 
$S\cap \{v_{i},\ldots,v_1\}$ is a target set for the graph $G(i)$, 
assuming that   each vertex $u$ in $G(i)$ has  threshold $k_i(u)$.
The proof is by induction on the number $i$  of nodes of $G(i)$.
\\
If $i=1$ then  the unique vertex $v_1$ in $G(1)$  either has threshold $k_1(v_1)=0$  and $S\cap \{v_1\}=\emptyset$ or the vertex has positive threshold  $k_1(v_1)>\d_1(v_1)=0$
and   $S\cap \{v_1\}=\{v_1\}$.
\\
Consider now $i>1$ and suppose the algorithm be correct on $G(i-1)$, that is, 
$S\cap \{v_{i-1},\ldots,v_1\}$ is a target set for $G(i-1)$ with threshold function   $k_{i-1}$.
We notice that in each among Cases 1, 2 and 3,   the algorithm updates the thresholds and the target set according to  Lemma \ref{lemmateo1}. Hence, the algorithm is correct on $G(i)$ with threshold function $k_i$.
The theorem follows since $G(n)=G$.
\end{proof}

\noindent
\remove{
\begin{theorem}
For any graph $G$ and threshold function $t$, the algorithm TSS($G$) outputs a solution for the  Target Set Selection  problem  
on $G$.
\end{theorem}
\proof
We  show that for each $i=1,\ldots,n$ the set 
$S\cap \{v_{i},\ldots,v_1\}$ is a target set for the graph $G(i)$, 
assuming that   each vertex $v$ in $G(i)$ has degree $d_{G(i)}(v)$ and threshold $t_{G(i)}(v)$ such that
\begin{equation}\label{assume}
\mbox{  $d_{G(i)}(v)=\delta_i(v), \quad t_{G(i)}(v)=k_i(v), \quad$ and $\quad 0\leq k_i(v)\leq \delta_i(v)$.}
\end{equation}
Notice, that  we admit the possibility of having a node with threshold 0: if $k_i(v)=0$ then $v$ activates independently of $S$ and of its neighbors in $G(i)$,{ that is, $v\in \Active_{G(i)}[S,1]$.} \\
\remove{Summarizing, we are assuming that for each node $v$ in $G(i)$ either 
\begin{itemize}
\item $v$ is in the target set or 
\item $k_i(v)=0$ 
\item or it needs at least $k_i(v)$ active neighbors,  among its $\delta_i(v)$ ones in $G(i)$, to activate. 
In particular,
\begin{itemize}
\item[1)]  $0\leq k_i(v)\leq \delta_i(v)$ means that either $v$ is in the target set or it needs at least $k_i(v)$ active neighbors, 
among its $\delta_i(v)$ ones in $G(i)$, to activate. 
In particular,
\begin{itemize}
\item  $k_i(v)=0$ means that $v$ activates independently of $S$ and of its neighbors in $G(i)$,
{ that is, $v\in \Active_{G(i)}[S,1]$;} 
\item $k_i(v)>\delta_i(v)$ means that $v$ cannot have enough  active neighbor in $G(i)$ and must be 
in each target set for the graph $G(i)$,
{ that is, if $v\not\in S$ then  $v\not\in \Active_{G(i)}[S,\ell]$ for any $\ell\geq 0$};
\end{itemize}
\end{itemize}
}

The proof is by induction on the number $i$  of nodes of $G(i)$.
\\
If $i=1$ then  the unique node $v_1$ in $G(1)$  either has threshold $k_1(v_1)=0$  and $S\cap \{v_1\}=\emptyset$ or the node has positive threshold  $k_1(v_1)>\d_1(v_1)=0$
and   $S\cap \{v_1\}=\{v_1\}$.
\\
Consider now $i>1$ and suppose the algorithm be correct on $G(i-1)$.  
Recall that $v_{i}$ denotes  the node the algorithm eliminates from  $G(i)$ (to obtain $G(i-1)$).
\begin{itemize}
\item
If case 1 of the algorithm holds, then $v_i$ has threshold $k_i(v_i)=0$. 
If case 2 of the algorithm holds, then $v_i$ is added to $S$. 
In both cases,  the node $v_i$ activates independently of the nodes in  $G(i-1)$;  
{ namely, $v\in \Active_{G(i)}[S,1]$;}. 
Moreover, for  each neighbor $u$  of $v_i$ in $G(i)$, the algorithm sets $\delta_{i-1}(u)=\delta_i(u)-1$ and 
 $k_{i-1}(u)=k_i(u)-1$   since the node $u$  gets $v_i$'s influence; thus the assumption that $G(i)$ satisfies  (\ref{assume})
implies that also  $G(i-1)$ satisfies  (\ref{assume}).
We can then use  the inductive hypothesis on $G(i-1)$.  Hence $S\cap \{v_{i-1},\ldots,v_1\}$ is a target set for the graph $G(i-1)$
and we get 
the correctness on $G(i)$ in these  cases.
\item
Consider now case 3 of the algorithm, that is $1\leq k_i(v)\leq \delta_i(v)$ for each node $v$ in $G(i)$.  We have that  $v_i\notin S$.
As above, the graph $G(i-1)$ satisfies (\ref{assume}) and we can apply the inductive hypothesis on  it. Hence, when the initial seed is $S$, all
the neighbors of $v_i$  among $\{v_{i-1},\ldots,v_1\}$ gets active; 
since $\delta_i(v_i)\geq k(v_i)$ also $v_i$ activates in $G(i)$.
\qed
}

It is possible to see that the TSS algorithm can be 
implemented in such a way  to run in $O(|E|\log|V|)$ time. Indeed we need to 
process the nodes $v\in V$ according to the metric $t(v)/(d(v)(d(v)+1))$,
 and the  updates that follow each processed node $v\in V$
involve at most the $d(v)$ neighbors of  $v$.

\section{Estimating the Size of the Solution}
{In this section we prove an upper bound on the size of the target set obtained by the algorithm 
TSS($G$) for any 
input graph $G$. 
Our bound, given in Theorem \ref{teo-upper}, improves on 
the bound  $ \sum_{v\in V} \min\left(1,\frac{t(v)}{d(v) +1}\right)$ given in \cite{ABW-10}  and \cite{CGM+}. Moreover,
the result in \cite{ABW-10} is based on the probabilistic method and an effective
 algorithm   results only by applying   suitable derandomization steps.}

\begin{theorem}\label{teo-upper}
Let $G$ be a connected graph  with at least 3 nodes and  threshold function $t:V\to \No $. 
The algorithm TSS($G$) outputs a target set 
$S$ of size


{{\begin{equation}\label{upper}
|S|\leq 
\sum_{v \in \{u\,|\, u\in V^{(2)}\,  \lor\,  t(u)\neq 1\}}\min\left(1,\frac{t(v)}{d^{(2)}(v){+}1}\right),
\end{equation}
where $V^{(2)}{=}\{v \, |\, v\in V,\, d(v){\geq} 2\}$
and
$d^{(2)}(v){=}|\{u\in \Gamma(v) \, |\, u\in V^{(2)} \, \lor \,  t(u){\neq} 1\}|$.}}

\end{theorem}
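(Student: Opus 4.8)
The plan is to attach to every graph $G(i)$ arising in the execution of $\mathrm{TSS}(G)$ a nonnegative potential $\Phi(i)$ with three properties: $\Phi(n)$ equals the right-hand side of (\ref{upper}); $\Phi(i-1)\le\Phi(i)$ whenever the deleted vertex $v_i$ is not added to $S$ (Cases 1 and 3); and $\Phi(i-1)\le\Phi(i)-1$ whenever $v_i$ is added to $S$ (Case 2). Since $\Phi(0)=0$, an induction on $n-i$ then yields $|S\cap\{v_n,\dots,v_{i+1}\}|\le\Phi(n)-\Phi(i)$, and the case $i=0$ gives $|S|\le\Phi(n)$, which is exactly (\ref{upper}). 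To set this up I would call $w$ a \emph{unit pendant} if $d(w)=1$ and $t(w)=1$, let $P$ be the set of unit pendants, and note that $P=V\setminus\{u\mid u\in V^{(2)}\lor t(u)\neq1\}$, so the sum in (\ref{upper}) runs precisely over $V\setminus P$ and $d^{(2)}(v)=|\Gamma(v)\setminus P|$ for $v\notin P$. Connectedness together with $|V|\ge3$ implies that $G$ has no isolated vertex and no one-edge component, hence no two unit pendants are adjacent. Writing $\delta_i^{(2)}(v)=|N_i(v)\setminus P|$, I would define
\[
\phi_i(v)=\begin{cases} 0 & \text{if $v\in P$ and $N_i(v)\neq\emptyset$,}\\ \min\!\Bigl(1,\dfrac{k_i(v)}{\delta_i^{(2)}(v)+1}\Bigr) & \text{otherwise,}\end{cases}\qquad \Phi(i)=\sum_{v\in V_i}\phi_i(v).
\]
A unit pendant keeps threshold $1$ while its neighbour survives, so $\phi_n(w)=0$ for $w\in P$, while $\phi_n(v)=\min(1,t(v)/(d^{(2)}(v)+1))$ for $v\notin P$; hence $\Phi(n)$ is the right-hand side of (\ref{upper}) and $\Phi(0)=0$.

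To prove the two potential inequalities I would compare $\phi_{i-1}$ with $\phi_i$ on $V_i\setminus\{v_i\}$: only the neighbours of $v_i$ change, their thresholds dropping by at most $1$ and their $\delta^{(2)}$ dropping by $1$ if $v_i\notin P$ and not at all if $v_i\in P$. In Case 1 one has $k_i(v_i)=0$, hence $\phi_i(v_i)=0$, and --- using the elementary inequality $(k-1)(q+1)\le kq$, valid for $k\le q+1$ --- the contribution of each affected neighbour does not increase, so $\Phi(i-1)\le\Phi(i)$. In Case 2 one has $\delta_i^{(2)}(v_i)\le\delta_i(v_i)<k_i(v_i)$, so $\phi_i(v_i)=1$ (and if $v_i\in P$ then its unique neighbour was removed earlier in Case 3, so $k_i(v_i)=1$ and again $\phi_i(v_i)=1$), while the neighbours' contributions are again nonincreasing; hence $\Phi(i-1)\le\Phi(i)-1$.

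The hard part will be Case 3, and it is here that the metric $k(u)/(\delta(u)(\delta(u)+1))$ is needed. In Case 3 every surviving vertex $u$ satisfies $1\le k_i(u)\le\delta_i(u)$, so its metric is at most $1/(\delta_i(u)+1)$; in particular a unit pendant still attached to its neighbour has $\delta_i=k_i=1$ and metric exactly $1/2$. Since $v_i$ maximizes the metric over $V_i$, it follows that whenever a still-attached unit pendant is adjacent to $v_i$ or to a neighbour of $v_i$, the metric of $v_i$ is at least $1/2$, forcing $\delta_i(v_i)=1$. With this observation I would distinguish the cases: (a) $v_i$ cannot have two or more unit-pendant neighbours; (b) if the only neighbour of $v_i$ is a unit pendant $w$, then $\phi_i(v_i)=1$, and deleting $v_i$ leaves $w$ isolated with threshold $1$, raising $\phi(w)$ from $0$ to $1$, so $\Phi$ is unchanged; (c) if no neighbour of $v_i$ is a unit pendant but some neighbour $u$ of $v_i$ has a still-attached unit-pendant neighbour, then $\delta_i(v_i)=1$ and $u$'s increase is at most $1/(\delta_i^{(2)}(u)+1)\le1/2=\phi_i(v_i)$; (d) in the remaining case each of the $\delta_i(v_i)$ neighbours $u$ of $v_i$ satisfies $\delta_i^{(2)}(u)=\delta_i(u)$, so its increase equals its own metric, and summing over the neighbours gives at most $\delta_i(v_i)\cdot\frac{k_i(v_i)}{\delta_i(v_i)(\delta_i(v_i)+1)}=\frac{k_i(v_i)}{\delta_i(v_i)+1}=\phi_i(v_i)$. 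In each case $\Phi(i-1)\le\Phi(i)$, completing the argument.

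The crux, and the only genuine obstacle, is this Case-3 bookkeeping: one must exclude the possibility that deleting $v_i$ simultaneously strands several unit pendants --- each of which then enters $S$ later via Case 2 while carrying no potential of its own --- and the sole leverage for doing so is the metric-maximality of $v_i$ together with the relatively large metric $1/2$ of a live unit pendant.
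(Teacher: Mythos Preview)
Your potential $\Phi(i)$ coincides with the paper's quantity $W(G(i))$ (the paper writes the contribution of degree-$1$ vertices as a separate counting term $|I_i|$, but a short check shows the two expressions agree on every vertex), and your three potential inequalities are exactly the paper's inductive step $W(G(i))-W(G(i-1))\ge |S\cap\{v_i\}|$. So the overall strategy is the same; only the case split inside Case~3 is organised differently: the paper splits according to $(d(v_i),\delta_i(v_i))$ and $d(u)$ for the unique neighbour $u$, whereas you split according to whether unit pendants occur among $N_i(v_i)$ or among $N_i(u)$ for $u\in N_i(v_i)$.

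There is one genuine gap in your Case~3 analysis. You never treat the sub-case $v_i\in P$, i.e.\ $d(v_i)=t(v_i)=1$. Such a $v_i$ can indeed be selected in Case~3 (its metric is $1/2$), and in your taxonomy it lands in case~(c): its unique neighbour $u\notin P$ has $v_i$ itself as a still-attached unit-pendant neighbour. But then your asserted equality $\phi_i(v_i)=1/2$ is false --- by your own definition $\phi_i(v_i)=0$ since $v_i\in P$ and $N_i(v_i)\neq\emptyset$. The conclusion you need still holds, because removing a vertex of $P$ does not change $\delta^{(2)}(u)$, so the increase of $u$ is $0$ and $\Phi(i-1)=\Phi(i)$; you just have to say so. The paper handles this explicitly as its sub-case $d(v_i)=1$, $d(u)\ge 2$, noting that $t(v_i)=1$ forces $\delta^{(2)}_{i-1}(u)=\delta^{(2)}_i(u)$. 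Adding one line ``if $v_i\in P$ then $\phi_i(v_i)=0$ and no $\delta^{(2)}$ changes, so $\Phi(i-1)=\Phi(i)$'' before your cases (b)--(d), and restricting (b)--(d) to $v_i\notin P$, closes the gap.
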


\proof
For each $i=1,\ldots, n$, define
\begin{itemize}
\item[a)] {{ $\delta^{(2)}_i(v)=|\{u\in N_i(v) \, |\, u\in V^{(2)} \, \lor \,  t(u)\neq 1\}|$;}}\\
\item[b)]
 $I_i=\left\{v \,{|}\, v\in V_i-V^{(2)}, \,  k_i(v)>\delta_i(v)  \right\}$,  
\item[c)]
$W(G(i))=\sum_{v \in V_i \cap V^{(2)}} \min\left(1,\frac{k_i(v)}{\delta^{(2)}_i(v)+1}.\right)+|I_i|.$
\end{itemize}
We prove that 
\begin{equation}\label{S}
|S\cap V_i|\leq W(G(i)),
\end{equation}
for each $i=1,\ldots, n$. 
The   bound (\ref{upper})  on $S$ follows recalling that   $G(n)=G$ and \\ 
{ {$I_n=\left\{v \,{|}\, v\not\in V^{(2)},\, t(v)= k (v)> \d(v)=d(v)=1 \right\}$}}.

\smallskip

The proof is by induction on $i$.
If $i=1$,  the claim follows  noticing that  
$$ |S\cap \{v_1\}|{=}\begin{cases}{0}&{\mbox{if $k_1(v_1){=}0$}}\\ {1}&{\mbox{if $k_1(v_1){\geq} 1$}}\end{cases}
\quad \mbox{ and } \quad W(G(1)){=}\begin{cases}{0}&{\mbox{if $k_1(v_1){=}0$ and  $v_1{\in} V^{(2)}$}}\\ {1}&{\mbox{otherwise.}}\end{cases}$$

 \noindent
Assume now (\ref{S}) holds  for  $i-1\geq 1$, and  consider  $G(i)$ and 
 the node $v_{i}$. 
We have
$$|S\cap \{v_{i},\ldots,v_1\}|=|S\cap \{v_i\}|+|S\cap \{v_{i-1},\ldots,v_1\}|\leq |S\cap \{v_i\}|+W(G(i-1)).$$
We  show  now that 
$$W(G(i))\geq W(G(i-1)) +|S\cap \{v_i\}|.$$
We first notice that $ W(G(i))-W(G(i-1))$ can be written as 
\begin{eqnarray*}
&& 
\sum_{v\in V_i \cap V^{(2)}} \min\left(1,\frac{k_i(v)}{\delta^{(2)}_i(v)+1}\right){+}|I_i|{-}\sum_{v\in V_{i-1} \cap V^{(2)}} \min\left(1,\frac{k_{i-1}(v)}{\delta^{(2)}_{i-1}(v)+1}\right){-}|I_{i-1}|
\end{eqnarray*}
We notice that  $k_i(v)-1\leq k_{i-1}(v)\leq k_i(v)$ and   $\delta_i(v)-1\leq \delta_{i-1}(v)\leq \delta_i(v)$,
for each neighbor $v$ of $v_i$ in $G(i)$, and that threshold and degree remain unchanged  for each   other
node  in $G(i-1)$. Therefore, we get 
\begin{eqnarray}\label{eq2}
\nonumber W(G(i))-W(G(i-1))\geq  |I_i| &-& |I_{i-1}| 
\\ \nonumber
&& +  \sum_{ v\in N_{i}(v_i) \cap V^{(2)} \atop k_i(v)\leq \delta^{(2)}_i(v)} 
\left(\frac{k_i(v)}{\delta^{(2)}_i(v)+1}-
                  \frac{k_{i-1}(v)}{\delta^{(2)}_{i-1}(v)+1}\right)  
\\ 
&& + \begin{cases} \min\left(1,\frac{k_{i}(v_i)}{\delta^{(2)}_{i}(v_i)+1}\right) 
& {\mbox{if } d(v_i)\geq 2} \\
 0 & \mbox{otherwise}.
\end{cases} 
\end{eqnarray}
We distinguish three cases according to those  in the algorithm TSS($G$).
\begin{description}
\item[I)] Suppose that Case 1 of the Algorithm TSS holds; i.e. $k_i(v_i)=0$. 
Recall that the Algorithm TSS($G$) updates the the values of $\d(u)$ and $k(u)$  for each node in $V_i$ as follows: 
   \begin{equation}\label{update}
	\d_{i-1}(u){=}\begin{cases}{\d_{i}(u){-}1}&{\mbox{if $u {\in} N(v_i)$}}\\
	                         {\d_{i}(u) }&{\mbox{otherwise,}}\end{cases}\quad
	                   k_{i-1}(u){=}\begin{cases}{k_{i}(u){-}1}&{\mbox{if $u {\in} N(v_i)$, $k_i(u){>}0$}}\\
	                         {k_{i}(u) }&{\mbox{otherwise.}}\end{cases}
\end{equation}
By b), (\ref{update}) and being $k_i(v_i)=0$, we immediately get   $I_{i-1}= I_{i}$.
 Hence, from 
(\ref{eq2})  we have
\begin{eqnarray*}
& &W(G(i))-W(G(i-1))\geq\sum_{ v\in N_{i}(v_i) \cap V^{(2)} \atop k_i(v)\leq \delta^{(2)}_i(v)} 
\left(\frac{k_i(v)}{\delta^{(2)}_i(v)+1}-
                  \frac{k_{i-1}(v)}{\delta^{(2)}_{i-1}(v)+1}\right)  \geq 0,
   \end{eqnarray*}
where the last inequality is implied  by (\ref{update}). Since we know that in Case 1 the selected node $v_i$ is not
part of   $S$, we get the desired inequality   $W(G(i))-W(G(i-1))\geq |S\cap \{v_{i}\}|$.

\item[II)] Suppose that Case 2 of the algorithm  holds; i.e. $k_i(v_i)\geq \delta_i(v_i)+1$  and  $k(v)>0$ for each $v\in V_i$. 
The Algorithm TSS($G$) updates the values of $\d(u)$ and $k(u)$  for each node $u\in V_{i-1}$ as in (\ref{update}).
Hence, we have 
$$I_{i-1}=\begin{cases}{I_i}&{\mbox{if $d(v_i)\geq 2$}}\\
                        {I_i-\{v_i\}}&{{\mbox{otherwise}}}\end{cases}$$
and, using this  case assumption, equation  (\ref{eq2}) becomes
\begin{equation*}
W(G(i))-W(G(i-1))\geq 1+
\sum_{v\in N_{i}(v_i) \cap V^{(2)} \atop k_i(v)\leq \delta^{(2)}_i(v)} \left(
\frac{k_i(v)}{\delta^{(2)}_i(v)+1}-\frac{k_{i-1}(v)}{\delta^{(2)}_{i-1}(v)+1}\right) \geq 1.
\end{equation*}
Since in Case 2  $v_i$ is 
part of the output  $S$, we get    $W(G(i))-W(G(i-1))\geq 1=|S\cap \{v_{i}\}|$.


\item[III)] Suppose that Case 3 of the algorithm  holds. 
We know that:
\begin{itemize}
\item[(i)] $1\leq k_i(v)\leq \d_i(v)$, for each $ v \in V_i$; 
\item[(ii)]  $I_i=\emptyset$---by (i) above;	
\item[(iii)] $\frac{k_i(v_i)}{\d_i(v_i)(\delta_i(v_i)+1)}\geq \frac{k_i(v)}{\d_i(v)(\d_i(v)+1)} $,
for each $v\in V_i$;
\item[(iv)]  for each $v\in V_{i-1}$, $k_{i-1}(u)=k_{i}(u)$ and 
	$\d_{i-1}(u)=\begin{cases}{\d_{i}(u){-}1}&{\mbox{ if $u {\in} N(v_i)$}}\\
	                         {\d_{i}(u) }&{\mbox{  otherwise.}}\end{cases}$
\end{itemize}
We distinguish  three cases on the value of  $d(v_i)$ and $\delta_i(v_i)$:
\begin{itemize}
	\item[$\bullet$]
Suppose first  $d(v_i)\geq \delta_i(v_i) \geq 2$.  We have    $\delta_i(v)\geq 2$, for each  $ \ v  \in V_i$.  Otherwise, 
	by (i) we would get $\delta_i(v)=k_i(v)=1$ and, as a consequence 
$$\frac{k_i(v)}{\delta_i(v)(\delta_i(v)+1)}=1/2, \quad \mbox{ while } 
\frac{k_i(v_i)}{\delta_i(v_i)(\delta_i(v_i)+1)} \leq \frac{1}{\delta_i(v_i)+1} \leq 1/3,$$
 contradicting (iii).
Therefore,  by b)   $I_{i-1}=\emptyset$ and 
$\delta^{(2)}_i(v)= \delta_i(v)$, for each $v \in V_i $.
 This, (ii), and (\ref{eq2}) imply
\begin{eqnarray*}
W(G(i))-W(G(i-1))
&\geq&
\sum_{v\in N_{i}(v_i)  \atop k_i(v)\leq \delta_i(v)} \left(
    \frac{k_i(v)}{\delta_i(v)+1} -  \frac{k_{i}(v)}{\delta_{i}(v)}
\right)+ \frac{k_{i}(v_i)}{\delta_{i}(v_i)+1}\\
&=& 
\frac{k_{i}(v_i)}{\delta_{i}(v_i)+1} - 
\sum_{v\in N_{i}(v_i) \atop k_i(v)\leq \delta_i(v)} 
          \frac{k_i(v)}{\delta_i(v)(\delta_i(v)+1)}.
\end{eqnarray*}
As a consequence,  by using (iii) and recalling that $v_i\notin S$ we get
$$
W(G(i))-W(G(i-1))\geq 
          \frac{k_i(v_i)}{\delta_i(v_i)+1}-\frac{k_{i}(v_i)}{\delta_{i}(v_i)+1} = 0  =|S\cap \{v_{i}\}|.$$

\item[$\bullet$] Assume now $d(v_i)\geq 2$ and $\delta_i(v_i) = 1$.
Let $u$ be the   neighbor of $v_i$ in $G(i)$. 
\\
 If $d(u) \geq 2,$  then $u\notin I_{i-1}$ and,
by (ii),  $I_{i-1}=I_i=\emptyset.$ 
By (\ref{eq2}),  we obtain 
\begin{eqnarray*}
W(G(i)){-}W(G(i{-}1))&\geq&   \left(
\frac{k_i(u)}{\delta^{(2)}_i(u)+1}{-}\frac{k_{i-1}(u)}{\delta^{(2)}_{i-1}(u)+1}\right){+}\min\left(1,\frac{k_{i}(v_i)}{\delta^{(2)}_{i}(v_i)+1}\right)\\
&=&   \left(
\frac{k_i(u)}{\delta^{(2)}_i(u)+1}{-}\frac{k_{i}(u)}{\delta^{(2)}_{i}(u)}\right){+}1/2\\
&=&1/2-\frac{k_i(u)}{\delta^{(2)}_i(u)(\delta^{(2)}_i(u)+1)}\\
  &\geq& 1/2-\frac{1}{\delta^{(2)}_i(u)+1}\geq 0 =|S\cap \{v_{i}\}|.
\end{eqnarray*}	
\\
 If {$d(u) =1$  then by (i) $1 \leq k_i(u)\leq t(u) \leq d(u)$ and we have $t(u)=1$.} Moreover, by (iv) $\d_{i-1}(u)=0$,  $\d^{(2)}_i(v_i)=0$ and  $k_{i-1}(u)=k_{i}(u)\geq 1$. Hence $u\in I_{i-1}.$ 
Recalling  that $I_i=\emptyset$, we get $I_{i-1}=\{u\}.$
As a consequence,  (\ref{eq2}) becomes
\begin{eqnarray*}
W(G(i))-W(G(i-1))&\geq&  |I_i|- |I_{i-1}|+ 0 +\min\left(1,\frac{k_{i}(v_i)}{\delta^{(2)}_{i}(v_i)+1}\right) \\
&=& 0 = |S\cap \{v_{i}\}|. 
\end{eqnarray*}

	\item[$\bullet$] Suppose finally   $d(v_i)=1$. Let $u$ be the unique neighbor of $v_i$ in $G(i)$
 \\
If $d(u)\geq 2$, then    $u\notin I_{i-1}$ and,
by (ii),  $I_{i-1}=I_i=\emptyset.$  Moreover, {by (i) we know that $1 \leq k_i(v_i)\leq t(v_i) \leq d(v_i)$ and we have 
$t(v_i)=1$. Hence } $\delta^{(2)}_i(u)=\delta^{(2)}_{i-1}(u)$.  By  (\ref{eq2}),  we obtain
\begin{equation*}
W(G(i)){-}W(G(i{-}1))\geq  0+ \left(\frac{k_i(u)}{\delta^{(2)}_i(u)+1}{-}\frac{k_{i-1}(u)}{\delta^{(2)}_{i-1}(u)+1}\right)
  = 0 =|S\cap \{v_{i}\}|.
\end{equation*}	
 Finally, the case  $d(u)\leq 1$ can hold  only if  the 
input graph  $G$ has a connected component  consisting of  two nodes. This is excluded by the theorem hypothesis.  
\end{itemize}
\qed
\end{description}

\medskip
\begin{remark}
We notice that the  bound in Theorem \ref{teo-upper} improves on  the previously known bound  $\sum_{v\in V} \min\left(1, t(v)/(d(v)+1)\right)$ given in 
\cite{ABW-10,CGM+}.
Indeed we are able to show that  for any graph 

\begin{equation} \label{boundComparison}
\sum_{v \in \{u\,|\, u\in V^{(2)}\,  \lor\,  t(u)\neq 1\}}\min\left(1,\frac{t(v)}{d^{(2)}(v){+}1}\right)
\leq 
\sum_{v\in V} \min\left(1, \frac{t(v)}{d(v)+1} \right).
\end{equation}

In order to prove  (\ref{boundComparison}),  we first notice that  the difference between the two bounds can be written as, 

\begin{eqnarray*}
 && \sum_{v\in V} \min\left(1, \frac{t(v)}{d(v)+1} \right) -  \sum_{v \in \{u\,|\, u\in V^{(2)}\,  \lor\,  t(u)\neq 1\}}\min\left(1,\frac{t(v)}{d^{(2)}(v){+}1}\right)=\\
&&\sum_{v\in V^{(2)}} \min\left(1, \frac{t(v)}{d(v){+}1} \right){+}\sum_{v\notin V^{(2)}} \min\left(1, \frac{t(v)}{2} \right) {-}   \sum_{v\in V^{(2)}} \min\left(1,\frac{t(v)}{d^{(2)}(v){+}1}\right) {+} \sum _{v\notin V^{(2)}\atop t(v)>1 } {1}=\\
&& \sum_{v\in V^{(2)}} \min\left(1, \frac{t(v)}{d(v){+}1} \right) {+} \sum_{v \notin V^{(2)}\atop  t(v)=1} {1/2}  -  \sum_{v\in V^{(2)}} \min\left(1,\frac{t(v)}{d^{(2)}(v){+}1}\right) \geq\\
&&\sum_{v\in V^{(2)} \atop t(v)\leq d(v)} \frac{t(v)}{d(v)+1} + \sum_{v \notin V^{(2)} \atop  t(v)=1} {1/2}  -   \sum_{v\in V^{(2)} \atop t(v)\leq d(v)} \min\left(1,\frac{t(v)}{d^{(2)}(v) +1}\right) \geq \\
&&\sum_{v\in V^{(2)} \atop t(v)\leq d(v)} \left(\frac{t(v)}{d(v)+1} +  {\frac{d(v)-d^{(2)}(v)}{2}} \right)  -    \sum_{v\in V^{(2)} \atop t(v)\leq d(v)} \min\left(1,\frac{t(v)}{d^{(2)}(v) +1}\right), 
\end{eqnarray*}

where the last inequality is due to the fact that 
$$\sum_{v \notin V^{(2)}\atop  t(v)=1} {1/2} = \sum_{v\in V^{(2)} } {\frac{d(v)-d^{(2)}(v)}{2}} \geq \sum_{v\in V^{(2)} \atop t(v)\leq d(v)} {\frac{d(v)-d^{(2)}(v)}{2}} $$
that is,  we are aggregating  the contribution of each node, having both degree and threshold equal to $1$, to that of  its 
unique neighbor.

Now let us consider the contribution of each $v \in V^{(2)}$, such that $t(v)\leq d(v)$, to the equation above.  If $d(v)=d^{(2)}(v)$, then clearly the contribution of $v$ is zero. If $d(v)-d^{(2)}(v)\geq 2$ then the contribution of $v$ is 
$$\frac{t(v)}{d(v)+1} +  {\frac{d(v)-d^{(2)}(v)}{2}} -     \min\left(1,\frac{t(v)}{d^{(2)}(v) +1}\right) \geq \frac{t(v)}{d(v)+1}+1-1\geq0$$

Finally, if $d(v)-d^{(2)}(v)=1$ we have 
$$\frac{t(v)}{d(v)+1} +  {1/2} -     \min\left(1,\frac{t(v)}{d(v)}\right)=\frac{t(v)}{d(v)+1} +  {1/2} -     \frac{t(v)}{d(v)} = \frac{2(d(v)-t(v))}{2d(v)(d(v)+1)} \geq0.$$
In each   case  the contribution of $v$ is non negative and   (\ref{boundComparison}) holds.
\medskip

Furthermore it is worth to notice that  our bound can give a dramatic improvement 
with respect to one in \cite{ABW-10,CGM+}.
As an example consider the star graph  on $n$ nodes 
with center $c$ given in Figure \ref{figureStar}
and thresholds equal to  1 for each leaf node and to   $t(c)\leq n$ for the center node $c$.
The ratio of the bound in \cite{ABW-10,CGM+} to the one in this paper is 
$$\frac{\sum_{v\in V} \min\left(1, \frac{t(v)}{(d(v)+1)} \right)}
   {\sum_{v \in \{u\,|\, u\in V^{(2)}\,  \lor\,  t(u)\neq 1\}}\min\left(1,\frac{t(v)}{d^{(2)}(v){+}1}\right)} 
=
\frac{\frac{t(c)}{n}+\frac{n-1}{2}}{1+0}\geq  \frac{n-1}{2}.$$
\end{remark}

\begin{figure}[h!]
	\centering
		\includegraphics[height=2.5truecm]{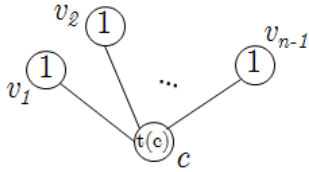}
		\caption{{A star graph with $n$ nodes. The bound in \cite{ABW-10,CGM+} provides a target set of size   $\frac{t(c)}{n}+\frac{n-1}{2}$ while the bound in Theorem \ref{teo-upper} is $1$. In this specific case the bound of Theorem \ref{teo-upper} is tight, the optimal target set consists of the center node $c$.} 
 \label{figureStar}}
\end{figure}

\remove{ 
\proof
Let $W(G(i))=\sum_{j=1}^i \min\left(1,\frac{k_i(v_j)}{\delta_i(v_j)+1}\right)$. We prove 
by induction on $i$ that  
\begin{equation}\label{S}
|S\cap \{v_{i},\ldots,v_1\}|\leq W(G(i)).
\end{equation}
The   bound (\ref{upper})  on $S$ follows recalling that   $G(n)=G$.
If $i=1$  we have
$|S\cap \{v_1\}|=
\min\left(1,\frac{k_1(v_1)}{\delta_1(v_1)+1}\right)=W(G(1))$.
 Assume now (\ref{S}) holds  for  $i-1\geq 1$, and  consider  $G(i)$ and 
 the node $v_{i}$. 
We have 
$|S\cap \{v_{i},\ldots,v_1\}|=|S\cap \{v_i\}|+|S\cap \{v_{i-1},\ldots,v_1\}|\leq |S\cap \{v_i\}|+W(G(i-1)).$
\noindent
We  show  that $W(G(i))\geq W(G(i-1)) +|S\cap \{v_i\}|$.
Recalling that  $N_i(v_i)$ denotes the neighborhood of $v_i$ in $G(i)$, we have
\begin{eqnarray*}
&&W(G(i))-W(G(i-1))=  \nonumber\\
&& \quad =
\sum_{j=1}^i \min\left(1,\frac{k_i(v_j)}{\delta_i(v_j)+1}\right)-\sum_{j=1}^{i-1} \min\left(1,\frac{k_{i-1}(v_j)}{\delta_{i-1}(v_j)+1}\right)
\nonumber\\
&& \quad =
\min\left(1,\frac{k_{i}(v_i)}{\delta_{i}(v_i)+1}\right)+
\sum_{v\in N_{i}(v_i)} \left[
         \min\left(1,\frac{k_i(v)}{\delta_i(v)+1}\right) - \min\left(1,\frac{k_{i-1}(v)}{\delta_{i-1}(v)+1}\right)
                  \right]\nonumber
\end{eqnarray*}
Therefore, we get
\begin{equation}\label{eq2}
W(G(i))-W(G(i-1))=
\min\left(1,\frac{k_{i}(v_i)}{\delta_{i}(v_i)+1}\right) 
+
\sum_{ v\in N_{i}(v_i) \atop k_i(v)\leq \delta_i(v)} \left[
\frac{k_i(v)}{\delta_i(v)+1}-\frac{k_{i-1}(v)}{\delta_{i-1}(v)+1}\right]
\end{equation}
We distinguish three cases according to the cases  in the algorithm TSS($G$).
\begin{itemize}
\item Suppose that Case 1 of the Algorithm TSS holds; i.e. $k_i(v_i)=0$. 
By (\ref{eq2}), we get 
\begin{equation*}
W(G(i))-W(G(i-1))=
\sum_{v\in N_{i}(v_i) \atop k_i(v)\leq \delta_i(v)} \left[
\frac{k_i(v)}{\delta_i(v)+1}-\frac{k_{i}(v)-1}{\delta_{i}(v)}\right]\geq 0= |S\cap \{v_{i}\}|.
   \end{equation*}

\item Suppose that Case 2 of the algorithm  holds; i.e. $k_i(v_i)\geq \delta_i(v_i)+1$. 
 By (\ref{eq2}), we get 
\begin{equation*}
W(G(i))-W(G(i-1))=
1 + \sum_{v\in N_{i}(v_i) \atop k_i(v)\leq \delta_i(v)} \left[
\frac{k_i(v)}{\delta_i(v)+1}-\frac{k_{i}(v)-1}{\delta_{i}(v)}\right] \geq 1 =|S\cap \{v_{i}\}|.
\end{equation*}

\item Suppose that Case 3  holds; i.e. $k_i(v_i)\leq \delta_i(v_i)$. 
In such a case we know that 
$$\frac{k_i(v)}{\delta_i(v)(\delta_i(v)+1)} \leq \frac{k_i(v_i)}{\delta_i(v_i)(\delta_i(v_i)+1)}$$
for each $v\in \{v_i,\ldots,v_1\}$
and
$S\cap \{v_{i}\}=\emptyset$. By  this and (\ref{eq2}),  we obtain 
\begin{eqnarray*}
W(G(i))-W(G(i-1))
&=&
\frac{k_{i}(v_i)}{\delta_{i}(v_i)+1}  + 
\sum_{v\in N_{i}(v_i) \atop k_i(v)\leq \delta_i(v)} \left[
    \frac{k_i(v)}{\delta_i(v)+1} -  \frac{k_{i}(v)}{\delta_{i}(v)}
\right]\\
&=& 
\frac{k_{i}(v_i)}{\delta_{i}(v_i)+1} - 
\sum_{v\in N_{i}(v_i) \atop k_i(v)\leq \delta_i(v)} 
          \frac{k_i(v)}{\delta_i(v)(\delta_i(v)+1)} \\
&\geq &
          \frac{k_i(v_i)}{\delta_i(v_i)+1}-\frac{k_{i}(v_i)}{\delta_{i}(v_i)+1} \\
&=& 0  \\
&=&|S\cap \{v_{i}\}|. 
\end{eqnarray*}
\end{itemize}
\qed
}
\section{Optimality Cases}\label{sec:trees} 
In this section, we prove that our algorithm TSS provides a unified setting for  several results,
  obtained in the literature  by means of different \emph{ad hoc} algorithms.
Trees, cycles and cliques are among the few cases known to admit optimal polynomial 
time algorithms  for the TSS problem \cite{Chen-09,NNUW}. 
In the following, we prove that our algorithm TSS provides the \emph{first}  unifying setting for all  these cases.
\begin{theorem} \label{trees}
The algorithm  TSS($T$) returns an optimal solution   whenever the input graph  $T$ is a  tree.
\end{theorem}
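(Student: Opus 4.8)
The plan is to prove, by induction on $i=1,\dots,n$, the stronger assertion that $S\cap V_i$ is a \emph{minimum-size} target set for $G(i)$ with threshold function $k_i$, where $S$ is the output of TSS$(T)$; the theorem is then the case $i=n$, since $G(n)=T$ and $k_n=t$. Write $\beta(G(i))$ for the minimum size of a target set of $G(i)$ with thresholds $k_i$. One inequality comes for free: the proof of Theorem~\ref{teo1} shows that $S\cap V_i$ is a target set for $G(i)$, so $|S\cap V_i|\ge\beta(G(i))$, and it remains to prove $|S\cap V_i|\le\beta(G(i))$. Setting $\epsilon_i=1$ when $v_i$ is added to $S$ (that is, when Case~2 occurs, $k_i(v_i)>\delta_i(v_i)$) and $\epsilon_i=0$ otherwise, the algorithm gives $|S\cap V_i|=|S\cap V_{i-1}|+\epsilon_i$, which by the inductive hypothesis equals $\beta(G(i-1))+\epsilon_i$. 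Hence everything reduces to the inequality $\beta(G(i))\ge\beta(G(i-1))+\epsilon_i$, i.e.\ to producing, from an \emph{arbitrary} target set of $G(i)$, a target set of $G(i-1)$ that is smaller by exactly $\epsilon_i$. The base case $i=1$ is immediate: $\{v_1\}$ is optimal iff $k_1(v_1)\ge1$, and then Case~2 fires, while otherwise Case~1 fires and $\emptyset$ is optimal.

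I would first dispatch the two cases that do not use the tree structure. In Case~2 ($\epsilon_i=1$), Lemma~\ref{fact1} gives $d_{G(i)}(v_i)=\delta_i(v_i)<k_i(v_i)$, so $v_i$ can never meet its threshold and lies in \emph{every} target set of $G(i)$; since Case~2 decreases the thresholds of $v_i$'s neighbours by exactly~$1$, the activation process of $G(i)$ agrees with that of $G(i-1)$ off $v_i$ (as in part~2 of the proof of Lemma~\ref{lemmateo1}), so deleting $v_i$ from any target set of $G(i)$ leaves a target set of $G(i-1)$; thus $\beta(G(i-1))\le\beta(G(i))-1$. In Case~1 ($k_i(v_i)=0$), starting from an optimal $S^{(i)}$ for $G(i)$ one first removes $v_i$ from it if present — a $0$-threshold vertex activates at round~$1$ for every seed, so this preserves the target-set property — and then, the Case~1 threshold update again exactly accounting for the unconditional activation of $v_i$, the set $S^{(i)}\setminus\{v_i\}$ is a target set of $G(i-1)$; hence $\beta(G(i-1))\le\beta(G(i))$.

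The heart of the argument is Case~3, the only place the tree hypothesis is used. The observation is that whenever Case~3 applies every surviving vertex $v$ satisfies $1\le k_i(v)\le\delta_i(v)$, so $G(i)$ is a forest with no isolated vertex and therefore contains a leaf $\ell$, with $\delta_i(\ell)=1$ and hence $k_i(\ell)=1$, whose selection value is $k_i(\ell)/(\delta_i(\ell)(\delta_i(\ell)+1))=1/2$, while any $v$ with $\delta_i(v)\ge2$ has value $k_i(v)/(\delta_i(v)(\delta_i(v)+1))\le 1/(\delta_i(v)+1)\le1/3$. So the {\tt argmax} of line~21 is always a leaf $v_i$ with $k_i(v_i)=1$; here $\epsilon_i=0$, no thresholds change, $G(i-1)=G(i)-v_i$ and $k_{i-1}=k_i$ restricted to $V_{i-1}$. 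Let $u$ be the unique neighbour of $v_i$ and $S^{(i)}$ an optimal target set of $G(i)$. A short exchange step lets us assume $v_i\notin S^{(i)}$: if $v_i\in S^{(i)}$ then, $v_i$ being a leaf, as a seed it can only ever contribute to $u$, so deleting $v_i$ (when $u\in S^{(i)}$) or swapping $v_i$ for $u$ (when $u\notin S^{(i)}$) yields a target set of $G(i)$ of size $\le|S^{(i)}|$ avoiding $v_i$. With $v_i\notin S^{(i)}$, $v_i$ is activated strictly after $u$ and never influences any other vertex, so the $G(i)$-process started at $S^{(i)}$, restricted to $V_{i-1}$, is exactly the $G(i-1)$-process started at $S^{(i)}$; therefore $S^{(i)}$ is a target set of $G(i-1)$ and $\beta(G(i-1))\le\beta(G(i))$. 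Combining the three cases completes the induction. I expect the main obstacle to be making this Case~3 exchange rigorous — in particular, verifying carefully that when $v_i$ is a leaf outside the seed its presence cannot help $u$ (because $u$ is necessarily active before $v_i$ is), so that the activation process really does descend to $G(i-1)$; the leaf identification via the selection metric is a short but essential preliminary.
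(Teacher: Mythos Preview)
Your proof is correct and follows essentially the same route as the paper's: the same induction on $i$, the same dispatch of Cases~1 and~2 via the obvious observations, and the same key computation in Case~3 showing that the selection metric forces $v_i$ to be a leaf with $k_i(v_i)=\delta_i(v_i)=1$. You are in fact more careful than the paper in Case~3: the paper simply asserts $|S^*_{i-1}|\le|S^*_i|$ once it knows $v_i$ is a leaf, whereas you spell out the exchange argument (swap $v_i$ for its unique neighbour $u$, then note that a non-seed leaf never influences anything) that justifies it.
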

\proof
Let $T=(V,E)$ and  $n=|V|$.
We recall that for $i=1,\ldots,n$: $v_i$ denotes the node selected during the $n-i+1$-th iteration of the while loop in TSS, 
 $T(i)$ is the forest induced by the   set  $V_i=\{v_{i},\ldots,v_1\}$, and
 $\delta_i(v)$ and  $k_i(v)$ are the degree and threshold of $v$, for $v\in V_i$.
Let $S$ be the target set produced by the algorithm TSS($T$). We prove by induction on $i$ that 
\begin{equation}\label{eq-T}
 |S\cap\{v_{i},\ldots,v_1\}|=|S^*_i|,
\end{equation}
where $S^*_i$ represents an optimal target set for the forest $T(i)$  with  threshold function  $k_i$.
For $i=1$, it is immediate  that  for the only node $v_1$ in $F(1)$ one has 
$$S\cap\{v_1\} = S^*_1 = 
    \begin{cases}{\emptyset} & {\mbox{ if } k_1(v_1)=0}\cr{\{v_1\}}& {\mbox{ otherwise. }}\end{cases}$$
Suppose now (\ref{eq-T}) true for $i-1$ and consider  the tree $T(i)$ and the selected node $v_i$.
\begin{enumerate}
\item Assume first that $k_i(v_i)=0$. We get 
$$|S\cap\{v_{i},\ldots,v_1\}|=|S\cap\{v_{i-1},\ldots,v_1\}|=|S^*_{i-1}|\leq |S^*_{i}|$$ and
the equality (\ref{eq-T}) holds for $i$.
\item Assume now that $k_i(v_i)\geq \delta_i(v_i)+1$. 
Clearly, any solution for $T(i)$ 
must include node $v_i$, otherwise it cannot be activated.
This implies that 
$$|S^*_i|=1+|S^*_{i-1}|=1+ |S\cap\{v_{i-1},\ldots,v_1\}|=|S\cap\{v_{i},\ldots,v_1\}|$$
and (\ref{eq-T}) holds for $i$.
\item
Finally, suppose  that $v_i={\tt argmax}_{i\geq j\geq 1}\left\{{k_i(v_j)}/{(\delta_i(v_j)(\delta_i(v_j)+1))}\right\}$ (cfr. line 21 of the algorithm).
In this case each leaf $v_j$ in $T(i)$ has 
$$\frac{k_i(v_\ell)}{\delta_i(v_\ell)(\delta_i(v_\ell)+1)}= \frac{1}{2}$$
 while
each internal node 
$v_\ell$ has 
$$\frac{k_i(v_\ell)}{\delta_i(v_\ell)(\delta_i(v_\ell)+1)}\leq \frac{1}{\delta_i(v_\ell)+1}\leq \frac{1}{3}.$$
Hence,  the  node $v_i$
must be  a leaf in $T(i)$ and has $k_i(v_i)=\delta_i(v_i)=1$.
Hence $|S\cap\{v_{i},\ldots,v_1\}|=|S\cap\{v_{i-1},\ldots,v_1\}|=|S^*_{i-1}|\leq |S^*_{i}|$.
\hphantom{aaaaaaaaa}\qed
\end{enumerate}

\begin{theorem}\label{teoC}
 The algorithm  TSS($C$) outputs an optimal solution   whenever the input graph   $C$ is  a cycle.
\end{theorem}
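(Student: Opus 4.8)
I would follow the scheme of the proof of Theorem~\ref{trees}. Put $C=C(n)$; for $i\le n-1$ the graph $C(i)$ induced by $V_i$ is a \emph{forest}, since deleting one vertex from a cycle produces a path and deleting vertices from a forest produces a forest. The argument proving Theorem~\ref{trees} never uses connectedness---it works verbatim whenever every $C(i)$ involved is a forest---so, writing $S$ for the output of $\mathrm{TSS}(C)$, it gives $|S\cap V_{n-1}|=|S^*_{n-1}|$, an optimal target set for $C(n-1)$ with thresholds $k_{n-1}$. Combined with Theorem~\ref{teo1} (so $S\cap V_n$ is a target set for $C$) and with Lemma~\ref{lemmateo1} (so target sets of $C(n-1)$ lift to target sets of $C$ of the same size, plus $v_n$ when $k_n(v_n)>\delta_n(v_n)$), the proof reduces to analysing the first deleted vertex $v_n$: one must show $\mathrm{opt}(C)=\mathrm{opt}(C(n-1))$ when $k_n(v_n)\le\delta_n(v_n)$ and $\mathrm{opt}(C)=\mathrm{opt}(C(n-1))+1$ when $k_n(v_n)>\delta_n(v_n)$, where $\mathrm{opt}(H)$ denotes the minimum size of a target set of $H$.

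Now I would split on which of Cases~1,~2,~3 removes $v_n$. Cases~1 ($k_n(v_n)=0$) and~2 ($k_n(v_n)>\delta_n(v_n)=2$, i.e.\ $t(v_n)\ge 3$) are settled exactly as the corresponding cases of the proof of Theorem~\ref{trees}: those arguments only use Theorem~\ref{teo1}, Lemma~\ref{lemmateo1}, and---for Case~2---that a degree-$2$ vertex of threshold at least $3$ lies in every target set of $C$. Assume $v_n$ is removed in Case~3. Then $t(v)\in\{1,2\}$ for all $v$, and since $\delta_n(v)=2$ for all $v$ the rule maximising $k_n(v)/(\delta_n(v)(\delta_n(v)+1))=t(v)/6$ selects a vertex of largest threshold. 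If every threshold is $1$, then $\mathrm{opt}(C)=1=\mathrm{opt}(C(n-1))$ (one seed propagates around the cycle, resp.\ along the path), so $|S\cap V_n|=|S\cap V_{n-1}|=1=\mathrm{opt}(C)$. Otherwise $t(v_n)=2$, $C(n-1)=C-v_n$ is a path with thresholds $k_{n-1}=k_n$, and what remains is to prove that deleting a threshold-$2$ vertex from the cycle leaves the optimum unchanged: $\mathrm{opt}(C)=\mathrm{opt}(C-v_n)$.

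Here $\mathrm{opt}(C)\le\mathrm{opt}(C-v_n)$ is Lemma~\ref{lemmateo1}. For the reverse I would use the key fact that, because $t(v_n)=d_C(v_n)=2$, for any seed set $T$ with $v_n\notin T$ the vertex $v_n$ becomes active only once both of its neighbours $a,b$ are active; hence $v_n$ never helps to activate another vertex, and the activation processes of $C$ and of $C-v_n$ started at $T$ agree on $V\setminus\{v_n\}$. Consequently a set not containing $v_n$ is a target set for $C$ iff it is one for $C-v_n$, so it suffices to show that $C$ has a minimum target set avoiding $v_n$. To obtain this I would describe $\mathrm{opt}$ of every cycle (resp.\ path) with thresholds in $\{1,2\}$ in closed form: contracting each maximal run of threshold-$1$ vertices to a single edge reduces the question to the all-threshold-$2$ case, where a packing/domination estimate gives $\mathrm{opt}=\lceil m/2\rceil$ for a cycle of $m\ge 1$ threshold-$2$ vertices and $\mathrm{opt}=\lceil (m+1)/2\rceil$ for a path of $m\ge 1$ threshold-$2$ vertices (and $\mathrm{opt}=1$ when $m=0$). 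Since removing the threshold-$2$ vertex $v_n$ turns $C$ into a path and lowers $m$ by exactly one, these formulas give $\mathrm{opt}(C-v_n)=\mathrm{opt}(C)$ at once.

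The main obstacle is precisely this last step. The tempting shortcut---replacing the seed $v_n$ by one of its neighbours---does \emph{not} always work: there are instances (three consecutive threshold-$2$ vertices with the seed sitting on the middle one) in which both $(T\setminus\{v_n\})\cup\{a\}$ and $(T\setminus\{v_n\})\cup\{b\}$ fail to be target sets and one genuinely has to move two elements of $T$ at once. So the delicate part is either to carry out a correct exchange argument in that residual configuration (via closed/stable sets), or---cleaner---to prove the two closed-form optima above, whose upper bounds are straightforward explicit constructions but whose matching lower bounds require the contraction-to-all-threshold-$2$ reduction together with the packing estimate on cycles and paths. Granting this, and combining it with the reduction of the first paragraph and with Theorem~\ref{teo1}, we get $|S\cap V_n|=\mathrm{opt}(C)$, i.e.\ $\mathrm{TSS}(C)$ returns a minimum target set.
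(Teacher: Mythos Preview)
Your proposal is correct and follows the same overall route as the paper: analyse the first deleted vertex $v_n$, reduce the cycle to a path, and invoke the tree result on the remaining iterations. The paper handles Cases~1 and~2 exactly as you do and, in Case~3, simply asserts that ``it is not difficult to see that there exists an optimal solution $S^*$ for $C$ such that $S^*\cap\{v_n\}=\emptyset$''---which is precisely the statement you set out to prove.

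Where you go further is in actually justifying that assertion. Your observation that the naive exchange (replace $v_n$ by a neighbour) can fail is correct and worth recording; the paper does not mention this pitfall. Your proposed remedy via the closed-form optima $\lceil m/2\rceil$ (cycle) and $\lceil (m{+}1)/2\rceil$ (path) for the all-threshold-$2$ reductions is valid: the contraction of maximal threshold-$1$ runs to edges preserves the optimum, and on the reduced instances the optima follow from the fact that a seed set is a target set iff its complement is an independent set (plus, on the path, that both endpoints must be seeded). So your argument is a genuine completion of what the paper leaves implicit, not a different proof.
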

\proof 
If  the first selected node $v_n$ has threshold    0  then clearly $v_n\not \in S^*$ for any optimal 
solution $S^*$.\\
If the threshold of  $v_n$ is larger than
 its degree then clearly $v_n\in S^*$ for any optimal 
solution $S^*$.
In both cases  $v_n\in \Active[S^*,1]$ and its neighbors can use $v_n$'s influence; 
that is, the algorithm correctly sets $k_{n-1}=\max(k_n -1,0)$ for these two nodes.\\
If threshold of each node $v\in V$ is $1\leq t(v)\leq d(v)$, we get that  during the first iteration of the algorithm  TSS($C$), the selected  node $v_n$ satisfies Case 3  and has  $t(v_n)=2$ if at least one of the nodes in $C$ has threshold $2$, otherwise  $t(v_n)=1$.
Moreover, it is not difficult  to see that there exists an optimal solution $S^*$ for $C$ such that $S^*\cap\{ v_n\}=\emptyset$. 
\remove{This is obvious if  either $t(v)=1$ for each $v$  or $t(v)=2$ for each $v$.
In case both 1 and 2 appear as threshold of some node in $C$, then  one optimal solution can be  formed  by  

-- $\lfloor\frac{|\{v\ | t(v)=2\}|}{2}\right\rfloor$
nodes of threshold 2  (just start from any node, move clockwise along the cycle and take each second one node having threshold 2) and, 

-- if  $|\{v\ | t(v)=2\}|$ is odd, 1 node of threshold 1.
}
\\
In each case, the result   follows by Theorem \ref {trees}, since the remaining graph is a path on nodes $v_{n-1},\ldots ,  v_1$.
\qed
\remove{
If  the first selected node $v_n$ has threshold    0  then clearly $v_n\not \in S^*$ for any optimal 
solution $S^*$.\\
If the threshold of  $v_n$ is larger than
 its degree then clearly $v_n\in S^*$ for any optimal 
solution $S^*$.
In both cases  $v_n\in \Active[S^*,1]$ and its neighbors can use $v_n$'s influence; 
that is, the algorithm correctly sets $k_{n-1}=\max(k_n -1,0)$ for these two nodes.\\
If threshold of each node $v\in V$ is $1\leq t(v)\leq d(v)$, we get that  during the first iteration of the algorithm  TSS($C$), the selected  node $v_n$ satisfies Case 3  and has  $t(v_n)=2$ if at least one of the nodes in $C$ has threshold $2$, otherwise  $t(v_n)=1$.
Moreover, it is not difficult  to see that there exists an optimal solution $S^*$ for $C$ such that $S^*\cap\{ v_n\}=\emptyset$. 
\remove{This is obvious if  either $t(v)=1$ for each $v$  or $t(v)=2$ for each $v$.
In case both 1 and 2 appear as threshold of some node in $C$, then  one optimal solution can be  formed  by  

-- $\lfloor\frac{|\{v\ | t(v)=2\}|}{2}\right\rfloor$
nodes of threshold 2  (just start from any node, move clockwise along the cycle and take each second one node having threshold 2) and, 

-- if  $|\{v\ | t(v)=2\}|$ is odd, 1 node of threshold 1.
}
\\
In each case, the result   follows by Theorem \ref {trees}, since the remaining graph is a path on $v_{n-1},\ldots v_1$.
\qed
END REMOVE }

\begin{theorem}\label{teoK} Let $K=(V,E)$ be a   clique with  $V= \{u_1,\ldots, u_n\}$ and $t(u_1)\leq\ldots\leq t(u_{n-m})< n \leq t(u_{n-m+1})\leq \ldots \leq t(u_n)$. The algorithm  TSS($K$) outputs an optimal target set of size 
\begin{equation}\label{eq-SK} 
m + \quad \max_{\mathclap{1\leq j\leq n-m}} \quad   \max(t(u_j) -m -j+1, 0).
\end{equation}
\end{theorem}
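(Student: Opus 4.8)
\proof
The plan is to first pin down, combinatorially, the minimum target set size of an arbitrary clique, and then show by induction on the number of vertices that $TSS$ attains it.

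\medskip\noindent\textbf{Step 1 (a characterization of target sets of a clique).}
First I would establish the following description of the dynamics on a clique. If $K'$ is a clique on $p$ vertices with arbitrary thresholds and $S\subseteq V(K')$, list $V(K')\setminus S$ as $w_1,\ldots,w_{p-|S|}$ with $t(w_1)\le\cdots\le t(w_{p-|S|})$; then $S$ is a target set for $K'$ if and only if $t(w_j)\le |S|+j-1$ for every $j$. For the ``if'' direction one observes that the active set always has the form $S\cup\{w_1,\ldots,w_J\}$, and that as long as $J<p-|S|$ the vertex $w_{J+1}$ has at least $|S|+J$ active neighbours, hence activates, so the process never stalls; for ``only if'', at the round in which $w_j$ activates the previously active set has size $\ge t(w_j)$ and $\le |S|+j-1$.

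\medskip\noindent\textbf{Step 2 (the optimal value).}
Each of $u_{n-m+1},\ldots,u_n$ has threshold $\ge n>n-1=d(u)$, hence lies in every target set of $K$. A standard exchange argument now shows that some optimal target set consists of these $m$ vertices together with the $k$ vertices of largest threshold among $u_1,\ldots,u_{n-m}$: replacing a seed $u_a$ by a non-seed $u_b$ with $a<b$ only makes the sorted threshold profile of the complement pointwise smaller, hence preserves the condition of Step 1. For such a set the condition of Step 1 reads $k\ge t(u_j)-m-j+1$ for $j=1,\ldots,n-m-k$. Since $t(u_j)<n$ gives $t(u_j)-m-j+1\le n-m-j$, any $k$ that violates some inequality $k\ge t(u_j)-m-j+1$ already does so at an index $j\le n-m-k$; therefore the least feasible value is $k^{*}:=\max_{1\le j\le n-m}\max(t(u_j)-m-j+1,0)$, and the minimum target set size of $K$ equals $m+k^{*}$, the value in~(\ref{eq-SK}).

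\medskip\noindent\textbf{Step 3 (the algorithm is optimal).}
Then I would prove by induction on $n$ that $TSS(K)$ returns a minimum target set; with Step 2 this gives the theorem. Let $v_n$ be the first vertex removed and $K'$ the clique on the remaining $n-1$ vertices with thresholds $k_{n-1}$ (again of the prescribed sorted form). Since the run of $TSS$ after $v_n$ is exactly $TSS(K')$, by Lemma~\ref{lemmateo1} the output sizes satisfy $|S_{TSS}(K)|=|S_{TSS}(K')|+\varepsilon$, with $\varepsilon=1$ if $v_n$ is selected in Case~2 and $\varepsilon=0$ otherwise; I then check $\mathrm{opt}(K)=\mathrm{opt}(K')+\varepsilon$. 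If $t(u_1)=0$, Case~1 fires on $u_1$ and $\varepsilon=0$; a threshold‑$0$ vertex activates for free, so $\mathrm{opt}(K)=\mathrm{opt}(K')$ (one inclusion is Lemma~\ref{lemmateo1}, the other follows from monotonicity of $\Active$, or directly from Step 1). If $t(u_1)\ge1$ and $m\ge1$, Case~2 fires on some $v_n$ with $t(v_n)\ge n$ and $\varepsilon=1$; since $v_n$ lies in every target set of $K$ and, being always active, merely lowers every other threshold by one, $S\mapsto S\setminus\{v_n\}$ is a bijection from the target sets of $K$ onto those of $K'$, so $\mathrm{opt}(K)=1+\mathrm{opt}(K')$. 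If $t(u_1)\ge1$ and $m=0$, Case~3 fires and removes a vertex $v_n$ of maximum threshold (ties being harmless by the symmetry of the clique), $\varepsilon=0$; here $\mathrm{opt}(K)\le\mathrm{opt}(K')$ is Lemma~\ref{lemmateo1}, and $\mathrm{opt}(K)\ge\mathrm{opt}(K')$ follows from Steps 1--2, because after the exchange reduction the feasibility conditions for the top‑$s$ seed sets of $K'$ are exactly those for $K$ with the single index $j=n-s$ dropped, so every $s$ feasible for $K$ is feasible for $K'$. In each case the induction hypothesis gives $|S_{TSS}(K')|=\mathrm{opt}(K')$, hence $|S_{TSS}(K)|=\mathrm{opt}(K)$; the base case $n=1$ is immediate.

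\medskip\noindent\textbf{Main obstacle.}
The delicate point is the identity $\mathrm{opt}(K)=\mathrm{opt}(K')$ in the Case~3 step: a maximum‑threshold vertex may belong to some optimal target sets of $K$ but not to others, so it cannot be deleted naively, and one genuinely needs the characterization of Step 1 together with the exchange reduction to compare the two minima. The remaining work is bookkeeping: verifying that $K'$ stays in the required sorted form, handling ties in the Case~3 argmax, and handling the cascade of Case~1 deletions triggered by zero thresholds.
\qed
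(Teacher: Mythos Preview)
Your proof is correct, and it takes a genuinely different route from the paper's.

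The paper argues in two separate halves. For the lower bound it reasons directly about any optimal $S^*$ (the $m$ vertices with threshold $\ge n$ must be seeds, and each $u_j$ needs enough active neighbours to activate), obtaining $|S^*|\ge m+\max_j\max(t(u_j)-m-j+1,0)$. For the matching upper bound it fixes a tie-breaking rule so that the surviving vertex set $V_i$ is always an interval $\{u_{\ell_i},\ldots,u_{r_i}\}$, and then proves by (reverse) induction on $i$ an explicit inequality (their display~(\ref{UBK2})) bounding $|S\cap V_i|$ by a formula in $\ell_i,r_i$ and the current thresholds; the case analysis tracks how $\ell_i,r_i$ and $k_i$ change under Cases~1--3.

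Your approach instead first proves the clean structural characterisation of target sets of a clique (your Step~1), derives the exact optimum via an exchange argument (Step~2), and then shows by induction on $n$ that removing the first vertex $v_n$ preserves the identity $\mathrm{opt}(K)=\mathrm{opt}(K')+\varepsilon$, the only nontrivial case being Case~3, which you handle by comparing the top-$s$ feasibility conditions of $K$ and $K'$. This avoids the interval bookkeeping and the explicit formula~(\ref{UBK2}); in exchange it requires the characterisation/exchange machinery and the Case~3 comparison you flag as the main obstacle. The paper's argument is more computational but self-contained at the level of the algorithm's variables; yours is more conceptual and makes transparent \emph{why} deleting a maximum-threshold vertex in Case~3 cannot hurt. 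Two small points worth tightening in a write-up: in Case~1 the algorithm may pick any zero-threshold vertex, not necessarily $u_1$ (harmless by symmetry), and the sentence invoking Lemma~\ref{lemmateo1} for $|S_{TSS}(K)|=|S_{TSS}(K')|+\varepsilon$ is really just the observation that the remainder of the run is $TSS(K')$.
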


\proof
It is well known  that there exists an optimal target set $S^*$ consisting of the $|S^*|$ nodes of higher threshold \cite{NNUW}.
Being $S^*$  a target set,  we know that  each node in the graph $K$ must activate. Therefore, for each $u\in V$  there exists some iteration  $i\geq 0$ 
such that $u\in\Active[S,i]$.
Assume $V= \{u_1,\ldots, u_n\}$ and 
     $$t(u_1)\leq\ldots\leq t(u_{n-m})< n \leq t(u_{n-m+1})\leq \ldots \leq t(u_n).$$
  Since the thresholds are non decreasing with the node index, it follows that:  
\begin{itemize}
\item  for each $j\geq n-m+1$, the node $u_j$ has threshold
 $ t(u_j)\geq n$ and  $u_j\in S^*$ must hold. Hence,  $|S^*|\geq m$;
\item for each  $j\leq n-|S^*|$,
the node $u_j$ activates if it gets, in addition to  the influence of its
$m$ neighbors with threshold larger than $n-1$, the influence of  at least $t(u_j)-m$ other neighbors, hence we have  
that  $$t(u_j)-m\leq j-1+(|S^*|-m)$$
 must hold;
\item for each  $j=n-|S^*|+1, \ldots, n-m$, we have 
$$t(u_j)-m-j+1\leq (n-1) - m - (n-|S^*|+1) +1=|S^*|-m +1.$$
\end{itemize}
Summarizing, we get, 

$$|S^*|\geq m + \quad  \max_{{1\leq j\leq n-m}}\quad  \max\left(t(u_j) -m -j +1,0\right).$$
We show now that the algorithm TSS outputs a target set $S$ whose size 
is upper bounded by the value in  (\ref{eq-SK}). Notice that, in general, the output  $S$ does not consist of the 
nodes having the highest thresholds.

Consider the residual graph $K(i)=(V_i,E_i)$, for some $1\leq i\leq n$. 
It is easy to see that   for any $u_j, u_s\in V_i$ it holds
  \begin{itemize}
	\item[] 1) $\delta_i(u_j)=i$;
	\\
	2) if $j<s$ then  $k_i(u_j)\leq k_i(u_s)$;
	\\
	3) if $t(u_j) \geq n$ then  $k_i(u_j) \geq i$,
\\
4) if $t(u_j) < n$ then  $k_i(u_j) \leq i$.
\end{itemize}
W.l.o.g. we assume that at any iteration of algorithm TSS if the node to be selected is not unique then  the tie is broken as follows (cfr. point 2) above):  
\begin{itemize}
	\item[i)] If Case 1 holds then the selected node is the one  with the lowest index,
 \item[ii)] if either Case 2 or Case 3 occurs then 
      the selected node is the one  with the largest index.
\end{itemize}
\noindent
Clearly, this implies that $K(i)$ contains $i$ nodes with consecutive indices among $u_1,\ldots,u_n$,
that is, 
\begin{equation} \label{Vi}
V_i=\left\{u_{\ell_i},u_{\ell_i+1},\ldots,u_{r_i}\right\}
\end{equation}
for some $\ell_i\geq 1$ and $r_i=\ell_i+i-1$.

Let $h=n-m$. 
We shall  prove by induction on $i$ that, for each $i=n,\ldots, 1$, 
at the beginning of the $n-i+1$-th iteration of the while loop in TSS($K$), it holds 
\begin{equation} \label{UBK2}
|S\cap V_i| \leq 
\begin{cases}
{(r_i-h)+\max_{\substack{\ell_i \leq j \leq h}} \max( k_i(u_j)-(r_i-h)-j+\ell_i, \, 0)} & {\mbox{ if $r_i > h$,}}\\
{\max_{\substack{\ell_i \leq j \leq r_i} } \max(k_i(u_j)-j+\ell_i ,\, 0)} &{\mbox{ if $r_i \leq h$.}}
\end{cases} 
\end{equation}
The  upper bound (\ref{eq-SK}) follows when  $i=n$; indeed $K(n)=K$ and $|S|=|S\cap V(n)|$.
\\
For $i=1$, $K(1)$ is induced by only one node, let say $u$, and
$$|S\cap\{u\}| =
\begin{cases}
{1} & {\mbox{\ \ if $k_1(u)\geq 1$,}} \\
{0}  & {\mbox{\ \ if $k_1(u)=0$.}} 
\end{cases}
$$
proving that the bound holds in this case.\\
Suppose now (\ref{UBK2}) true for some $i-1\geq 1$ and consider the $n-i+1$-th iteration of the algorithm TSS. 
Let $v$ be the node selected by algorithm TSS at the $n-i+1$-th iteration.
We distinguish three cases according to the  cases of the algorithm TSS($G$).

\medskip
\noindent
{\underline{Case 1:
 $k_{i}(v)=0$}. \  By i) and (\ref{Vi}),  one has  $v=u_{\ell_i}$, $\ell_{i-1}=\ell_i+1$ and $r_{i-1}=r_i$.
Moreover, $k_{i}(u_j)=k_{i-1}(u_j)+1$  for each $u_j\in V_{i-1}$.
Hence, 
\begin{eqnarray*}
\lefteqn{|S\cap V_i| = |S\cap V_{i-1}|}\\
 & & \leq 
\begin{cases}
{(r_i{-}h)+\max_{\substack{\ell_i+1 \leq j \leq h}} \max( k_{i-1}(u_j){-}(r_i{-}h){-}j+\ell_i{+}1,\, 0)} & {\mbox{if $r_i > h$,}}\\ 
{\max_{\substack{\ell+1 \leq j \leq r} } \max(k_{i-1}(u_j)-j+\ell+1,\, 0)} & {\mbox{if $r_i \leq h$,}}
\end{cases} 
\\
&& \\
& & = 
\begin{cases}
{(r_i-h)+\max_{\substack{\ell_i \leq j \leq h}} \max(k_{i}(u_j)-(r_i-h)-j+\ell_i,0)}& {\mbox{if $r_i > h$,}}\\
{\max_{\substack{\ell \leq j \leq r} } \max(k_{i}(u_j)-j+\ell , 0)} & {\mbox{ if $r \leq h$.}}
\end{cases}
\end{eqnarray*}

\medskip
\noindent
{\underline{Case 2: $k_{i}(v)>\delta_{i}(v)$}. \ By  ii) and (\ref{Vi}) we have $v=u_{r_i}$, $\ell_i=\ell_{i-1}$, $r_{i-1}=r_i-1$.
Moreover,  $k_{i}(u_j)=k_{i-1}(u_j)+1$  for each $u_j\in V_{i-1}$.
Recalling relations 3) and 4),
we have 
\begin{eqnarray*}
\lefteqn{|S\cap V_i| =  1+|S\cap V_{i-1}|}\\
 & & \leq 1{+} 
\begin{cases}
(r_{i-1}{-}h){+}\max_{\substack{\ell_{i-1} \leq j \leq h}} \max(k_{i-1}(u_j){-}(r_{i-1}{-}h){-}j{+}\ell_{i-1},\, 0) & \mbox{if $r_{i-1}{>} h$,}\\  
\max_{\substack{\ell_{i-1} \leq j \leq r_{i-1}} } \max(k_{i-1}(u_j)-j+\ell_{i-1},\, 0) & \mbox{if $r_{i-1} {\leq} h$,}
\end{cases} \\
&& \\
& & = 
\begin{cases}
{(r_i{-}h)+\max_{\substack{\ell_i \leq j \leq h}} \max(k_{i-1}(u_j)+1{-}(r_i-h){-}j+\ell_i,\, 0)} & {\mbox{if $r_i{-}1 > h$,}}\\
{\max_{\substack{\ell \leq j \leq r_i-1} } \max(k_{i-1}(u_j)+1-j+\ell_i ,\, 1) } & { \mbox{if $r_i{-}1 \leq h$.}}
\end{cases} \\
& &  \\
& & = 
\begin{cases}
{(r_i-h)+\max\{0,\max_{\substack{\ell_i \leq j \leq h}} k_{i}(u_j)-(r_i-h)-j+\ell_i\}} & { \mbox{\ \ if $r_i > h$,}}\\
{\max \{0,\max_{\substack{\ell_i \leq j \leq r_i} } k_{i}(u_j)-j+\ell_i \}} & { \mbox{\ \ if $r_i \leq h$.}}
\end{cases} 
 \end{eqnarray*}

\medskip
\noindent
{\underline{Case 3: $0<k_{i}(v)\leq \delta_{i}(v)$}. 
\   By  ii) and (\ref{Vi}) we have $v=u_{r_i}$, $\ell_i=\ell_{i-1}$, $r_{i-1}=r_i-1$.
Moreover,  $k_{i}(u_j)=k_{i-1}(u_j)$  for each $u_j\in V_{i-1}$.
Recalling that by 3) and 4)  we have $t(u_r) <n$, which implies  $r_i\leq h$,
we have 
\begin{eqnarray*}
|S\cap V_i| = |S\cap V_{i-1}|
 & & \leq \max_{\substack{\ell_{i-1} \leq j \leq r_{i-1}} } \max(k_{i-1}(u_j)-j+\ell_{i-1}, \, 0)  \\
 & & \leq \max_{\substack{\ell_i \leq j \leq r_i-1} } \max(k_{i}(u_j)-j+\ell_i, \, 0) \\
 & & \leq \max_{\substack{\ell_i \leq j \leq r_i} } \max(k_{i}(u_j)-j+\ell_i, \, 0). 
\end{eqnarray*}
\qed

\remove{
\proof {\em (Sketch - the full proof is given in  Appendix A.)}
 It is well known  that there exists an optimal target set $S^*$ consisting of the $|S^*|$ nodes of higher threshold \cite{NNUW}.
Being $S^*$  a target set,   each node $u_j$ must activate, that is, $u_j\in\Active[S,i]$ for some $i\geq 0$.
Assume $V= \{u_1,\ldots, u_n\}$ and $t(u_1)\leq\ldots\leq t(u_{n-m})< n \leq t(u_{n-m+1})\leq \ldots \leq t(u_n)$.  Since the thresholds are non decreasing with the node index, it follows that:  
\begin{itemize}
\item  for each of the $m$ nodes s.t. $ t(u_j)\geq n$, it must hold $u_j\in S^*$, hence  $|S^*|\geq m$;
\item for each  $j\leq n-|S^*|$,
the node $u_j$ activates if it gets, in addition to  the influence of its
$m$ neighbors with threshold larger than $n-1$, the influence of  $t(u_j)-m$ other neighbors, hence we have  
that  $t(u_j)-m\leq j-1+(|S^*|-m)$ must hold;
\item if $n-|S^*|+1  \leq j\leq n-m$, then
$t(u_j)-m-j+1\leq (n-1) - m - (n-|S^*|+1) +1.$
\end{itemize}
Summarizing, we get, 
$|S^*|\geq m + \quad  \max_{{1\leq j\leq n-m}}\quad  \max\left(t(u_j) -m -j +1,0\right)$.
\\
We  show  that the algorithm TSS outputs a target set $S$ whose size 
is upper bounded by the value in  (\ref{eq-SK}). In general  $S$ does not consist of the 
nodes having the highest thresholds.
\\
Consider the residual graph $K(i)=(V_i,E_i)$, for some $1\leq i\leq n$. 
\remove{It is easy to see that   for any $u_j, u_s\in V_i$ it holds
  \begin{itemize}
	\item[] 1) $\delta_i(u_j)=i$;
	\\
	2) if $j<s$ then  $k_i(u_j)\leq k_i(u_s)$;
	\\
	3) if $t(u_j) \geq n$ then  $k_i(u_j) \geq i$,
\\
4) if $t(u_j) < n$ then  $k_i(u_j) \leq i$.
\end{itemize}}
W.l.o.g. we assume that  if the node to be selected is not unique then  the tie is broken as follows (notice that  if $j<s$ then  $k_i(u_j)\leq k_i(u_s)$):  

 \ i) If Case 1 holds then the selected node is the one  with the lowest index,

 ii) 
     otherwise the selected node is the one  with the largest index.

\noindent
This implies that $K(i)$ contains $i$ nodes with consecutive indices, 
that is, 

\centerline{$
V_i=\left\{u_{\ell_i},u_{\ell_i+1},\ldots,u_{r_i}\right\}
$
}

\noindent
for some $\ell_i\geq 1$ and $r_i=\ell_i+i-1$.

Let $h=n-m$. 
We   prove by induction on $i=n,\ldots, 1$, that
at the beginning of the $n-i+1$-th iteration of the while loop in TSS($K$), it holds 
{\small{
\begin{equation} \label{UBK}
|S\cap V_i| \leq 
\begin{cases}
{(r_i-h)+\max_{\substack{\ell_i \leq j \leq h}} \max( k_i(u_j)-(r_i-h)-j+\ell_i, \, 0)} & {\mbox{ if $r_i > h$,}}\\
{\max_{\substack{\ell_i \leq j \leq r_i} } \max(k_i(u_j)-j+\ell_i ,\, 0)} &{\mbox{ if $r_i \leq h$.}}
\end{cases} 
\end{equation}
}}
The  upper bound (\ref{eq-SK}) follows when  $i=n$; indeed $K(n)=K$ and $|S|=|S\cap V(n)|$.
\\
For $i=1$, the graph $K(1)$ consists of one node,  say $u$, and
$|S\cap\{u\}| $ takes value $1$ if if $k_1(u)\geq 1$, and value 0 if $k_1(u)=0$; hence
the bound holds in this case.\\
Suppose now (\ref{UBK}) true for some $i-1\geq 1$ and consider the $n-i+1$-th iteration of the algorithm TSS. 
Let $v$ be the node selected by algorithm TSS at the $n-i+1$-th iteration.
We distinguish three cases according to the algorithm cases and for each one, we can show that 
(\ref{UBK}) holds for the graph $K(i)$ (which is obtained by $K(i-1)$ by removing $v$).\hfill \qed
}

\section{Computational experiments.}

 We have extensively tested our algorithm TSS$(G)$ both on random graphs and on
real-world data sets, and we found that  our algorithm
performs surprisingly well in practice. This seems to suggest that the otherwise 
important inapproximability
result of Chen \cite{Chen-09} refers to  rare or artificial  cases.

\vspace*{-0.3truecm}
\subsection{ Random Graphs}

The first set of tests was done in order to compare the results of our algorithm to 
the exact solutions,  found
by formulating  the  problem as an 0-1 Integer Linear
Programming (ILP) problem. 
Although the ILP approach  provides  the optimal solution, 
it fails to return the solution in a reasonable time (i.e., days) 
already for moderate size networks.
We applied both our algorithm 
and  the  ILP algorithm to random graphs
with up to 50 nodes. 
{Figures \ref{fig4} depicts the results on Random Graphs  $G(n, p)$ on $n$ nodes (any possible edge occurs independently with probability $0 < p < 1$). The two plots report the results obtained for $n=30$ and $n=50$. For each plot the value of the $p$ parameter appears along the X-axis, while the size of the solution appears
along the Y-axis. Results on intermediates sizes exhibit similar behaviors.}
Our algorithm   produced  target sets of size close to the optimal (see Figure \ref{fig4});
for  several instances it  found an optimal solution.

\begin{figure}[ht!]
\begin{center}
\includegraphics[height=3.3truecm,width=5truecm]{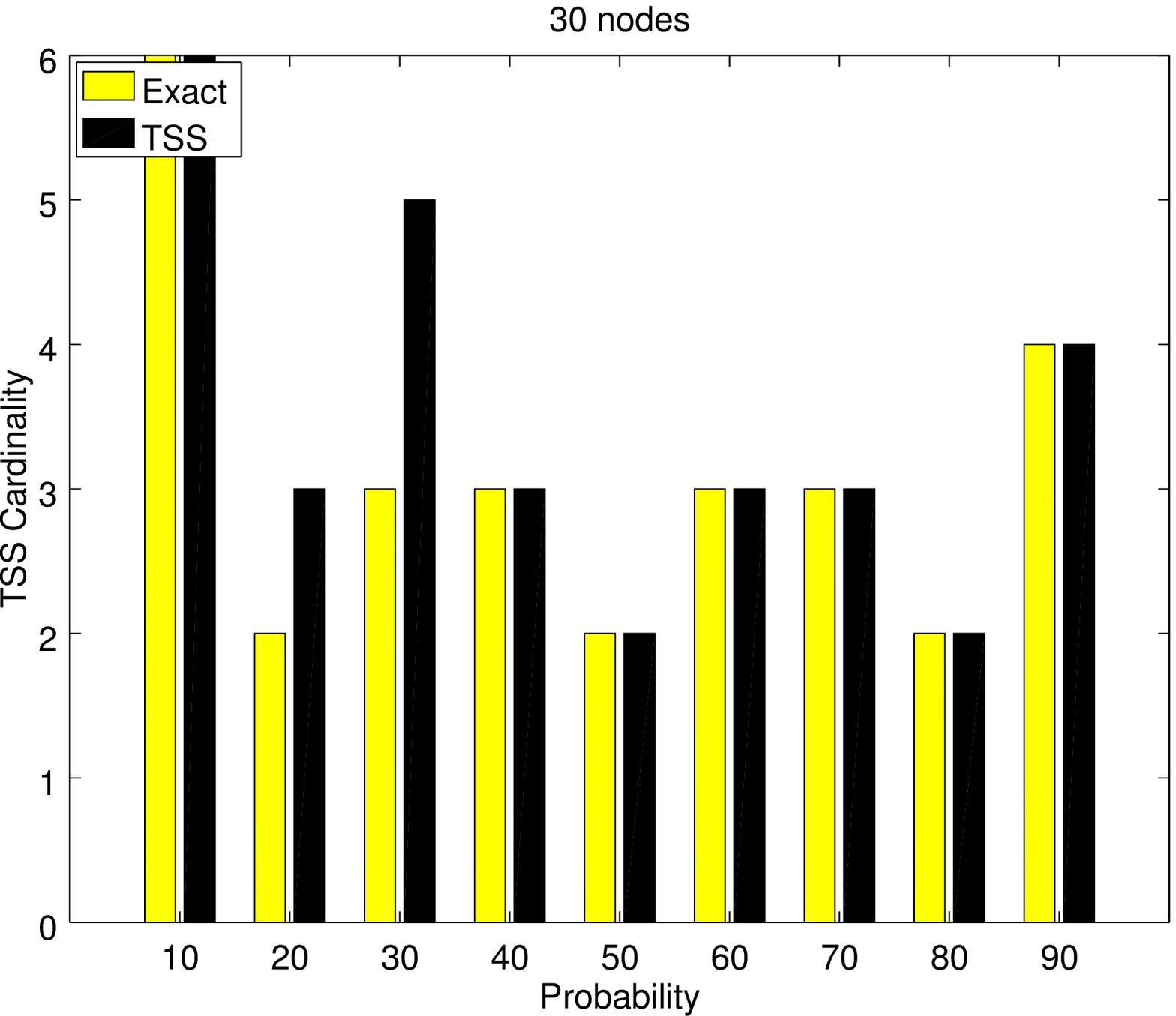}
$\qquad\qquad$
\includegraphics[height=3.3truecm,width=5truecm]{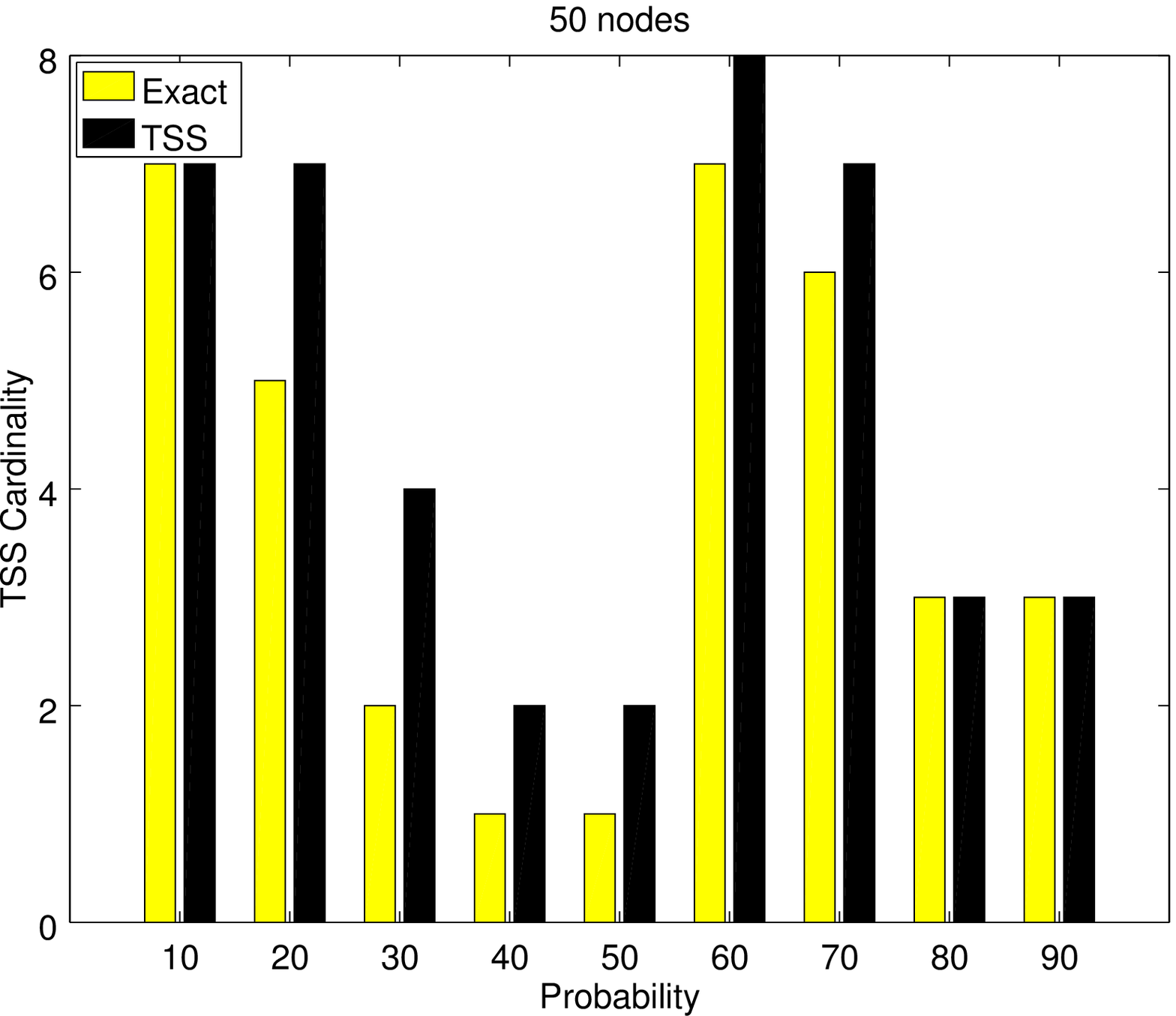}
\centerline{\hphantom{aaaaaaaaaaaaaaaaaaaaa} (a) \hphantom{aaaaaaaaaaaaaaaaaaaaaaaaaaaaaaaaaa} (b) \hphantom{aaaaaaaaaaaaaaaaaaaa}\hss}
\caption{Experiments  for random graphs  $G(n, p)$ on $n$ nodes (any possible edge occurs independently with probability $0 < p < 1$). 
 (a) $n=30$,   (b) $n=50$ with $p\in\{10/100, 20/100,\ldots, 90/100\}$.
 For each node the threshold was fixed to a random value between 1 and the node degree. \label{fig4} }
\end{center}
\end{figure}
\vspace*{-0.6truecm}
\subsection{Large Real-Life Networks }
We performed  experiments on several real social 
networks of various sizes from the Stanford Large Network Data set Collection (SNAP) \cite{snap}
and the Social Computing Data Repository at Arizona State University \cite{ZL09}.
The  data sets  we considered include both networks for which  small target sets exist
and networks needing larger target  sets (due to the existence of  communities, i.e., tightly connected disjoint groups of nodes
that appear to delay the diffusion process).

\paragraph{Test Network} Experiments have been conducted on the following networks: 
\begin{itemize}
	\item BlogCatalog \cite{ZL09}: a friendship network crawled from BlogCatalog, 
 a social blog directory website which manages the bloggers and their blogs. 
  It has  88,784 nodes and 4,186,390 edges. Each node represents a blogger and the network contains an edge $(u, v)$  if  blogger $u$ is friend of blogger $v$.
	\item BlogCatalog2 \cite{ZL09}: a friendship network crawled from BlogCatalog. 
  It has  97,884 nodes and 2,043,701 edges.
	\item BlogCatalog3 \cite{ZL09}: a friendship network crawled from BlogCatalog. 
  It has 10,312  nodes and 333,983 edges.
	\item BuzzNet \cite{ZL09}: BuzzNet is a photo, journal, and video-sharing social media network.	
  It has 101,168  nodes and 4,284,534 edges.
	\item CA-AstroPh\cite{snap}: A collaboration network of Arxiv ASTRO-PH (Astro Physics).    
It has  18,772 nodes and 198,110 edges. Each node represents an author  and the network contains an edge $(u, v)$  if an author $u$ co-authored a paper with author $v$.
		\item ca-CondMath \cite{snap}  A collaboration network of Arxiv COND-MAT (Condense Matter Physics).   
It has  23,133 nodes and 93,497 edges.
\item ca-GrQc \cite{snap}: A collaboration network of Arxiv GR-QC (General Relativity and Quantum Cosmology), 
It has  5,242 nodes and 14,496 edges.
\item ca-HepPh \cite{snap}: A collaboration network of Arxiv HEP-PH (High Energy Physics - Phenomenology),  it covers  papers from January 1993 to April 2003.   
It has  10,008 nodes and 118,521 edges. 
\item ca-HepTh \cite{snap}: A collaboration network of HEP-TH (High Energy Physics - Theory) 
It has  9,877 nodes and 25,998 edges. 
\item Delicious \cite{ZL09}: A friendship network  crawled on Delicious,  a social bookmarking web service for storing, sharing, and discovering web bookmarks.	It has	103,144	nodes and 1,419,519 edges.
\item Douban \cite{ZL09}: A friendship network crawled on  Douban.com, a Chinese  website providing user review and recommendations for movies, books, and music. It has  154,907 nodes  and  654,188 edges.
\item Lastfm \cite{ZL09}: Last.fm is a music website, founded in UK in 2002. It has claimed over 40 million active users based in more than 190 countries.	It has 	108,493	nodes  and   5,115,300 edges.
\item Livemocha \cite{ZL09}:	Livemocha is the world's largest online language learning community, offering free and paid online language courses in 35 languages to more than 6 million members from over 200 countries around the world.	It has 	104,438	nodes and 2,196,188 edges.
\item YouTube2 \cite{snap}:  is a data set crawled from YouTube,
the  video-sharing web site that includes a social network. In the Youtube social network, users form friendship each other and users can create groups which other users can join.
It contains  1,138,499 users and 2,990,443  edges.
\end{itemize}

The main characteristics of the studied networks are shown in Table 1. In particular, for each network we report  the maximum degree, the diameter, the size of the largest connected component (LCC), the number of triangles, the clustering coefficient and the network modularity \cite{N06}.
 
\begin{table}[ht]
\begin{center}
\resizebox{1.05\linewidth}{!} {
\begin{tabular}{|l|r|r|r|r|r|r|}
\hline
Name 											 & Max deg & Diam  &  LCC size & Triangles & Clust Coeff & Modul. \\ \hline
BlogCatalog \cite{ZL09}        &  9444     &     --     &     88784       &  51193389 &  0.4578          &      0.3182	\\ \hline
BlogCatalog2 \cite{ZL09}   		  &    27849  &   5       &      97884      &   40662527&  0.6857          &      0.3282       \\ \hline
BlogCatalog3 \cite{ZL09}   				&    3992   &   5       &      10312      &   5608664 &  0.4756          &      0.2374      \\ \hline
BuzzNet \cite{ZL09} 							&    64289  &   --       &      101163     &   30919848&  0.2508          &      0.3161     \\ \hline
ca-AstroPh \cite{snap} 							&    504   	&   14      &      17903      &   1351441 &  0.6768          &      0.3072      \\ \hline
ca-CondMath \cite{snap} 				&    279   	&   14      &      21363      &   173361  &  0.7058          &      0.5809      \\ \hline
ca-GrQc \cite{snap} 							&    81    	&   17      &      4158       &   48260   &  0.6865          &      0.7433     \\ \hline
ca-HepPh \cite{snap} 								&    491    &   13      &      11204      &   3358499 &  0.6115          &      0.5085     \\ \hline
ca-HepTh \cite{snap}						&    65    	&   17      &      8638       &   28399   &  0.5994          &      0.6128     \\ \hline
Delicious		\cite{snap}	       		 &   3216		  &   --     &     536108	     &   487972  &  0.0731          &      0.602     \\ \hline
Douban \cite{ZL09}						&    287    &   9       &     154908      &   40612   &  0.048           &      0.5773     \\ \hline
Last.fm \cite{ZL09}	 			 		&    5140		&   --       &      1191805    &   3946212 &    0.1378        &       0.1378     \\ \hline
Livemocha \cite{ZL09}						&    2980		&   6       &      104103     &   336651  &  0.0582          &      0.36     \\ \hline
Youtube2 \cite{ZL09}				   		&    28754  &   --       &      1134890    &   3056537 &  0.1723          &       0.6506     \\ \hline
\end{tabular}
}
\caption{Networks parameters.} 
\end{center}
\label{table}	
\end{table}

\paragraph{The competing algorithms.} We compare the performance of our algorithm TSS  toward  that of the best, to our knowledge, computationally feasible algorithms
in the literature.
Namely, we compare to 
 Algorithm {\em TIP\_DECOMP} recently presented in
\cite{SEP}, in which nodes minimizing the difference between  degree and threshold  are pruned from the graph 
until a ``core''  set is produced.
 We also compare our algorithm to the {\em VirAds} algorithm presented in \cite{DZNT}.
Finally, we compare to  an (enhanced) 
{\em Greedy} strategy  (given in Figure \ref{GreedyAlg}), in which nodes of maximum degree are iteratively inserted in the
target set  and pruned from the graph. 
Nodes that remains with zero threshold are simply eliminated from the graph, until no node remains.

\begin{figure}[ht!]
\begin{center}
\begin{tabular}{l}
\hline\\
{\bf  Algorithm} GREEDY-TSS($G$)\\
{\bf Input:} A graph $G=(V,E)$ with thresholds $t(v)$ for $v\in V$.\\
 $S=\emptyset$ \\
   $U=V$ \\  
   {\bf for } each $v\in V $ {\bf do} \{ \\
   \quad $\d( v)=d( v)$\\
  \quad$k( v)=t( v)$ \\
  \quad $N( v)=\Gamma( v)$ \\
	\} \\
 {\bf while} $U\neq \emptyset$  {\bf do }\{ \\
 
\quad $v={\tt argmin_{u\in U}\left\{k(u)\right\}}$ \\
\quad {\bf if } $k(v)>0$ {\bf then} \{\\
       $\quad$$\quad$       
				  	$v={\tt argmax}_{u\in U}\left\{\delta(u)\right\}$\\
				   $\quad$$\quad$     $S=S\cup\{v\}$   \\ \quad  \} \\
 $\ $ {\bf for } each $u\in N(v)$ {\bf do}  $\quad$ \{  \\
    
		 $\quad$$\quad$  $k(u)=\max\{0,k(u)-1\}$\\
		$\quad$$\quad$    $\d(u)=\d(u)-1$\\
   $\quad$$\quad$    $N(u)=N(u)-\{v\}$\\
$\quad$$\quad$     $U=U-\{v\}$\\
$\quad$\} \\
\} \\
\hline
\end{tabular}
\caption{GREEDY-TSS($G$)\label{GreedyAlg}}
\end{center}
\end{figure}

\paragraph{Thresholds values.} According to the scenario considered in  \cite{SEP}, in our experiments the thresholds are constant among all vertices (precisely the
constant value is an integer in the interval $[1, 10]$ and for each vertex $v$ the threshold
$t(v)$ is set as $min\{t, d(v)\}$ where $t = 1,2,\ldots,10$.

\paragraph{Results.} 
Figures \ref{fig9}--\ref{fig19} depict
the experimental results on  large real-life networks. For each network the results are reported
in a separated plot. For each plot the value of the threshold parameter
appears along the X-axis, while the size of the solution appears
along the Y-axis.
For each dataset, we compare the performance of our algorithm TSS   to the 
 algorithm {\em TIP\_DECOMP}
\cite{SEP},  to the  algorithm {\em VirAds} \cite{DZNT}, and to the 
{\em Greedy} strategy.

All test results consistently
show that the  TSS algorithm we introduce  in this paper    presents  the best performances  on all the considered networks, while none among  TIP\_DECOMP, VirAds, and  Greedy  is always better than the other two.


\begin{figure}[ht!]
\begin{center}
\includegraphics[width=10.5truecm]{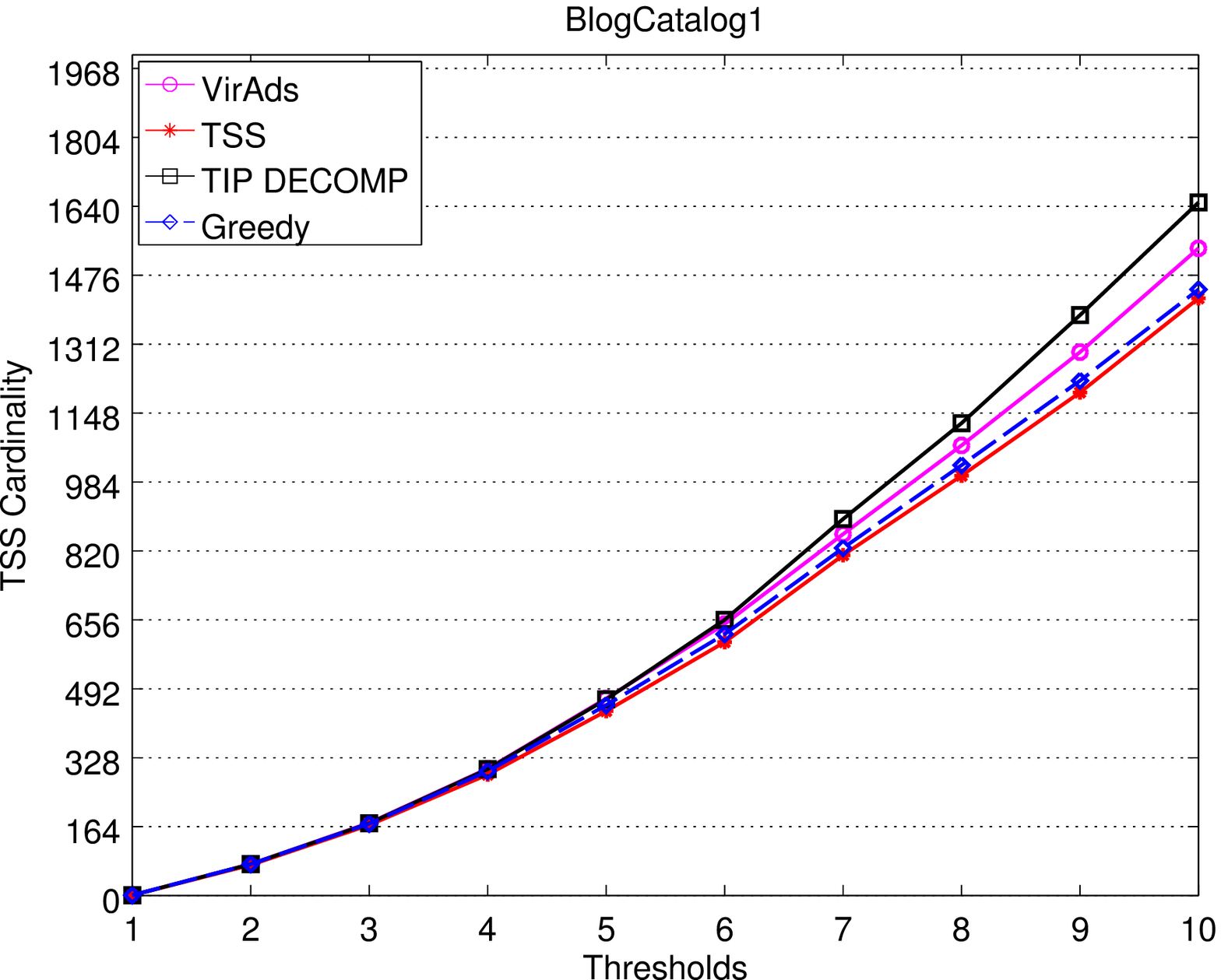}
\caption{ BlogCatalog \cite{ZL09}. \label{fig9} }
\end{center}
\end{figure}

\begin{figure}[ht!]
\begin{center}
\includegraphics[width=9.5truecm]{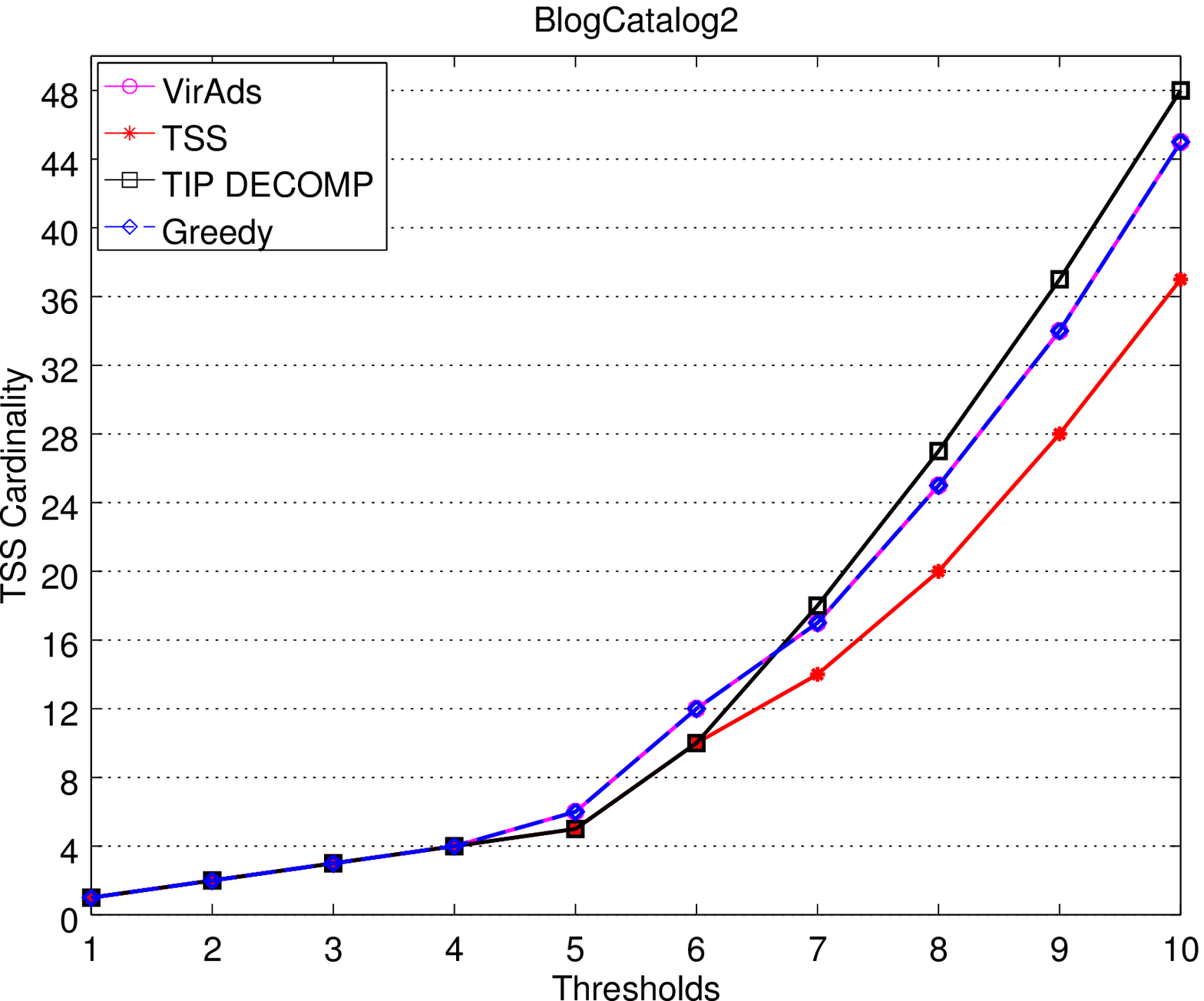}
\caption{BlogCatalog2 \cite{ZL09}.  \label{fig5} }
\end{center}
\end{figure}

\begin{figure}[ht!]
\begin{center}
\includegraphics[width=10truecm]{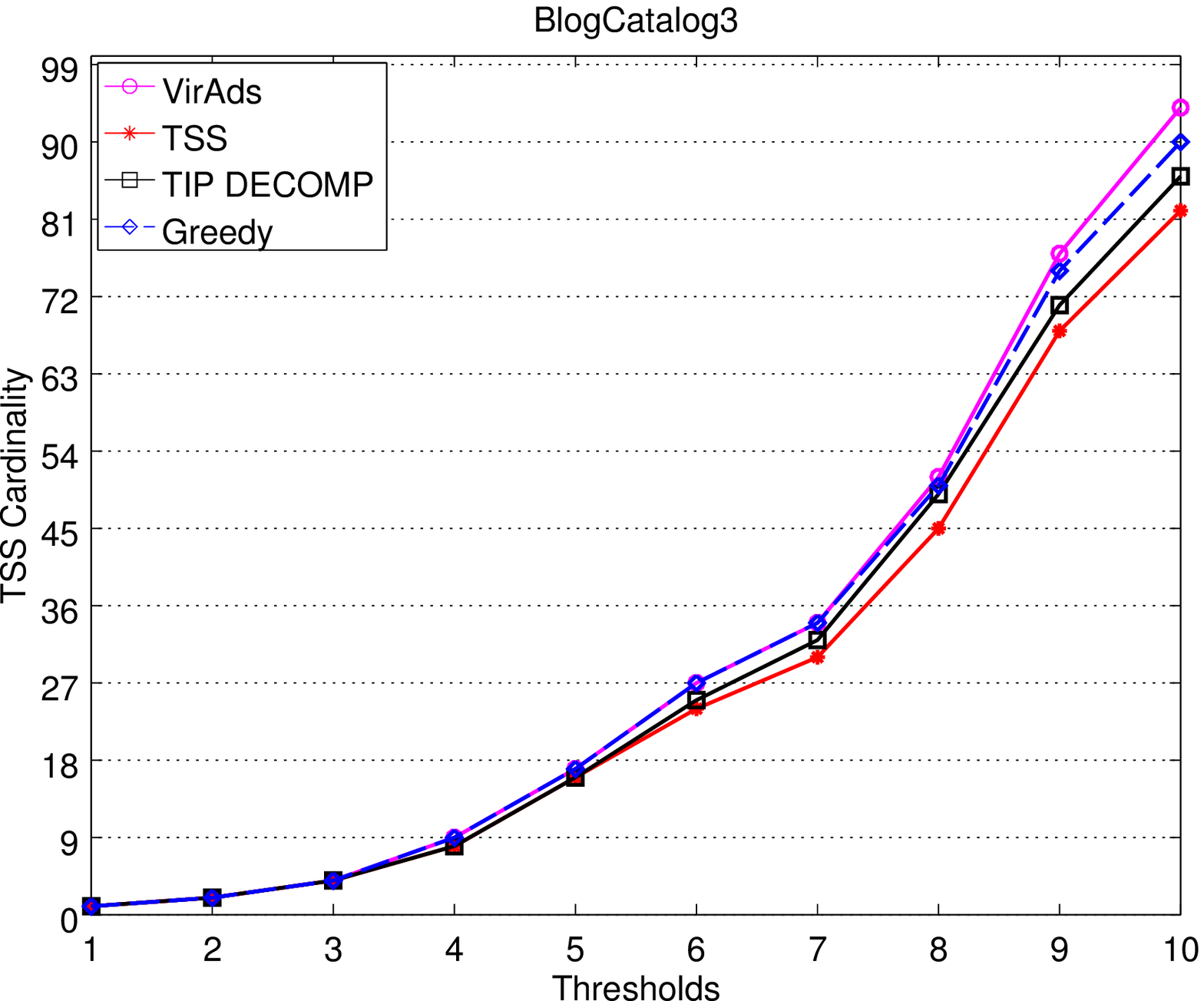}
\caption{ BlogCatalog3 \cite{ZL09}. \label{fig10} }
\end{center}
\end{figure}

\begin{figure}[ht!]
\begin{center}
\includegraphics[width=10truecm]{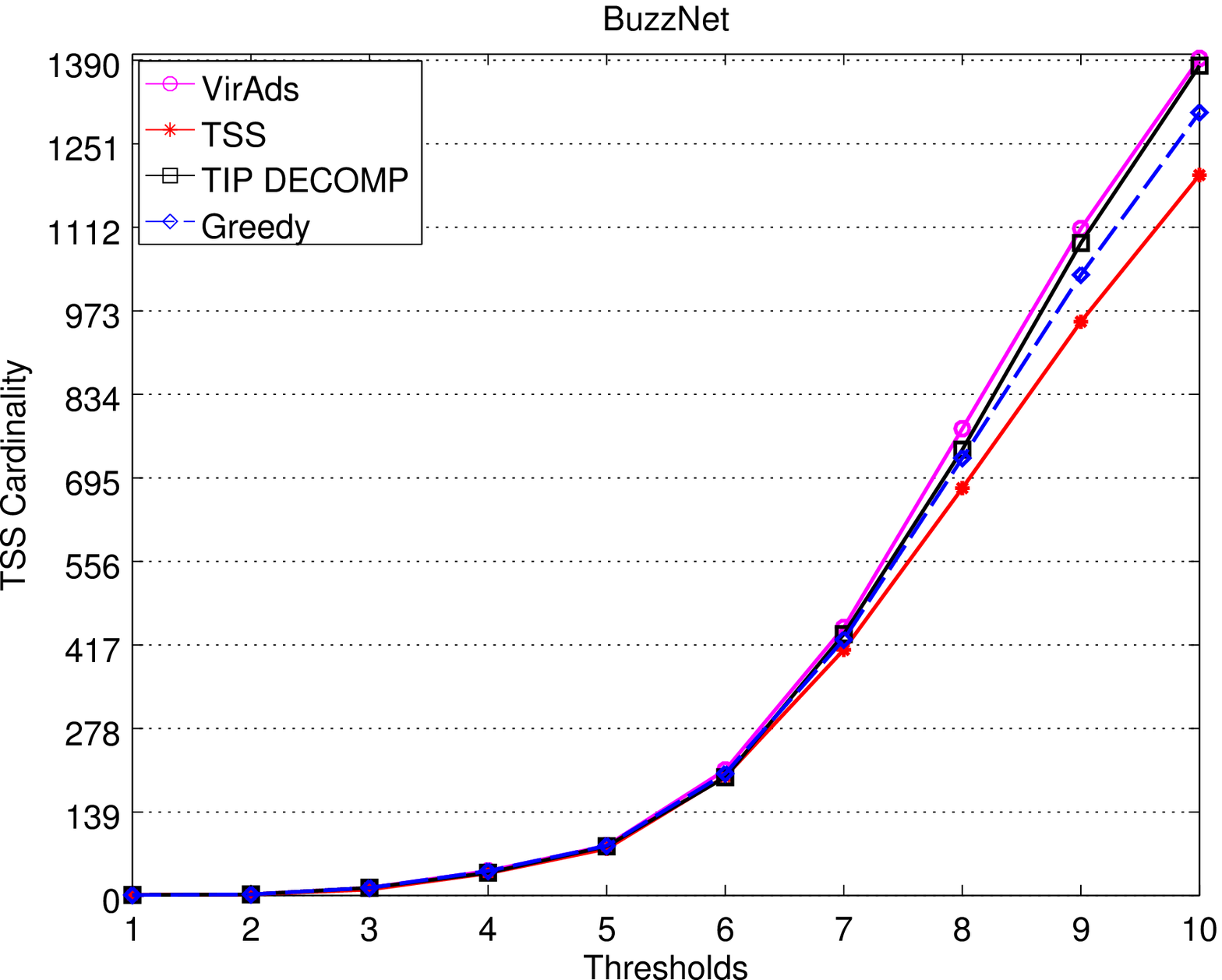}
\caption{ BuzzNet \cite{ZL09}. \label{fig11} }
\end{center}
\end{figure}	

\begin{figure}[ht!]
\begin{center}
\includegraphics[width=10truecm]{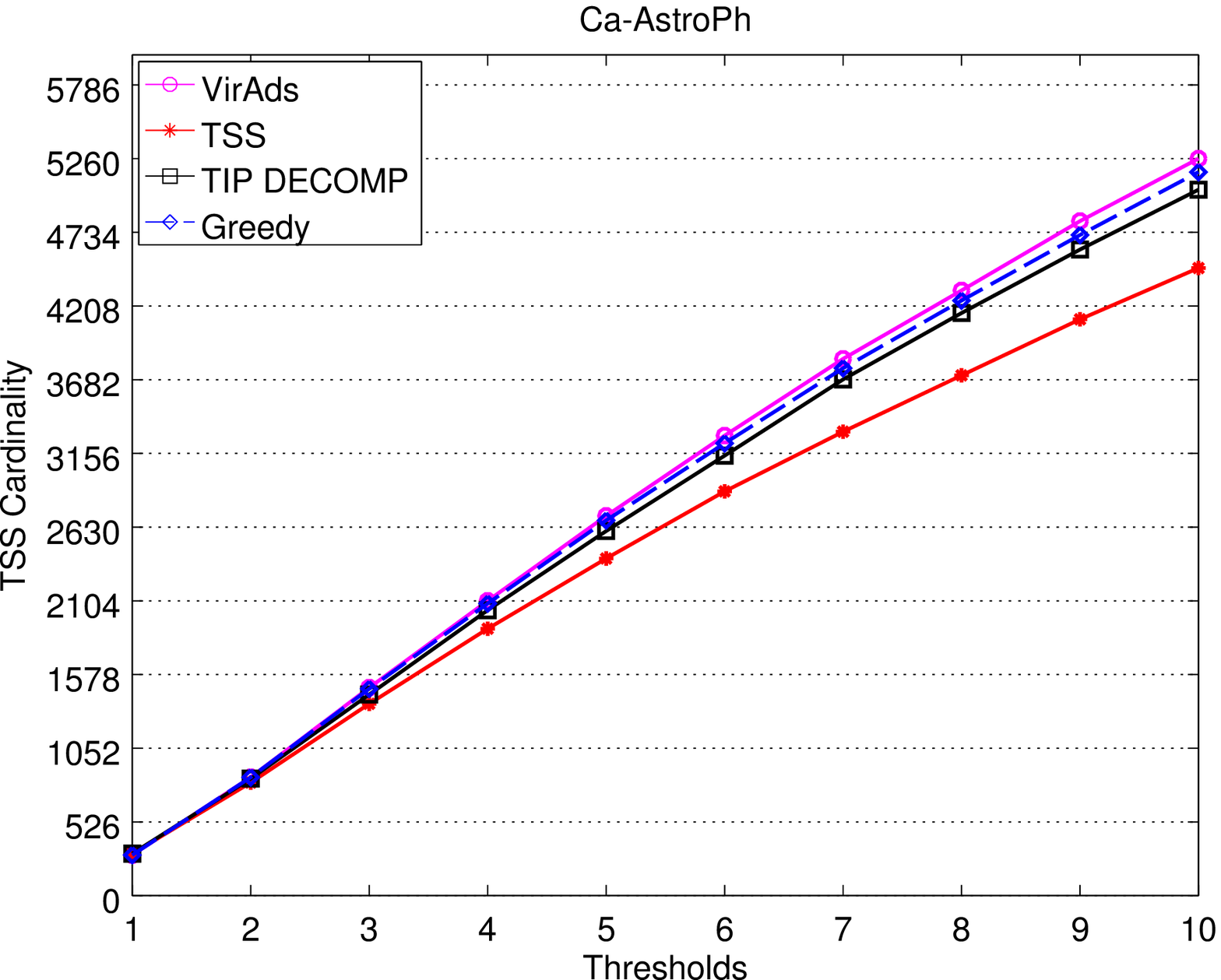}
\caption{CA-Astro-Ph\cite{snap}. \label{fig12} }
\end{center}
\end{figure}	

\begin{figure}[ht!]
\begin{center}
\includegraphics[width=10truecm]{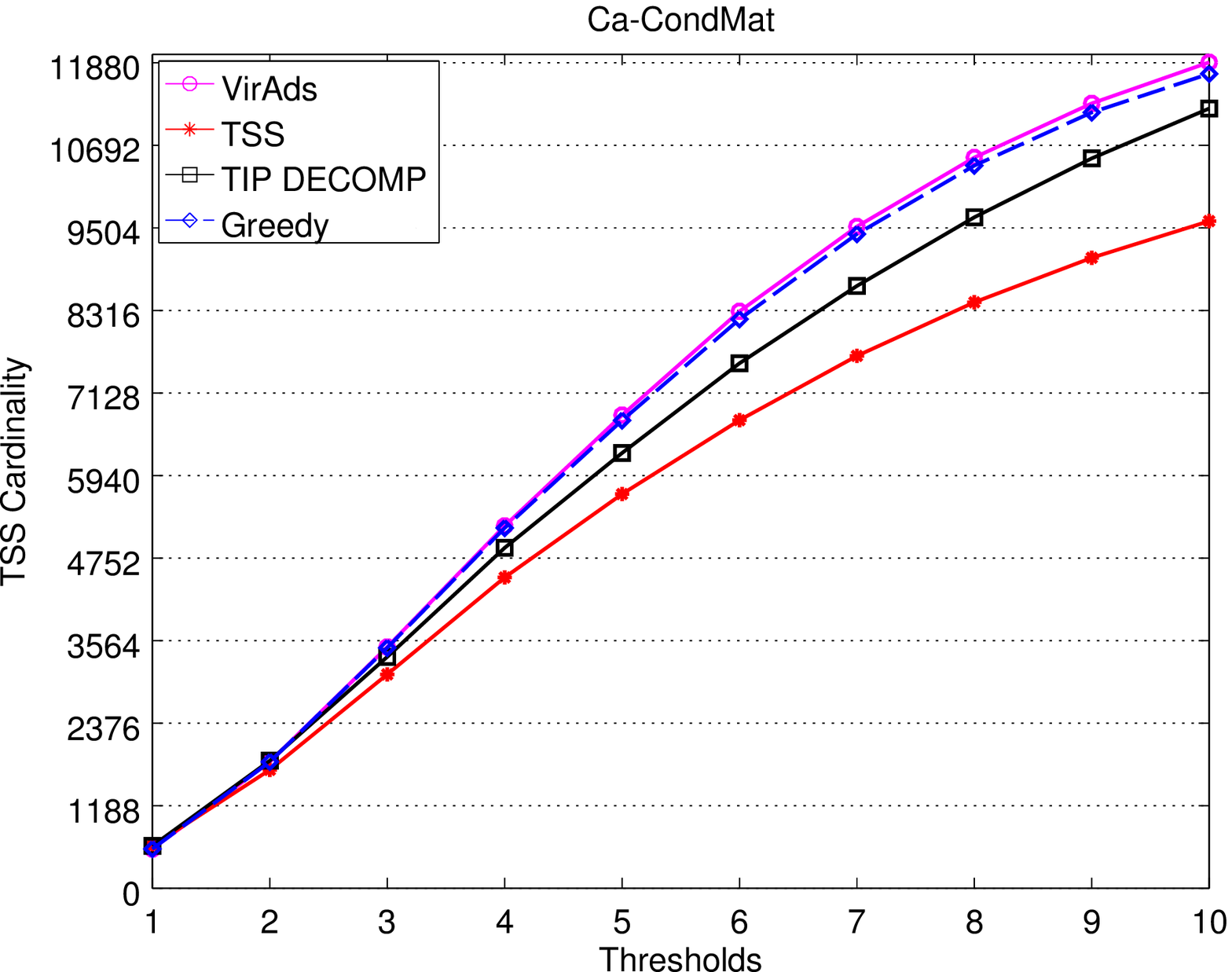}
\caption{ Ca-CondMat \cite{snap}. \label{fig8} }
\end{center}
\end{figure}

\begin{figure}[ht!]
\begin{center}
\includegraphics[width=10truecm]{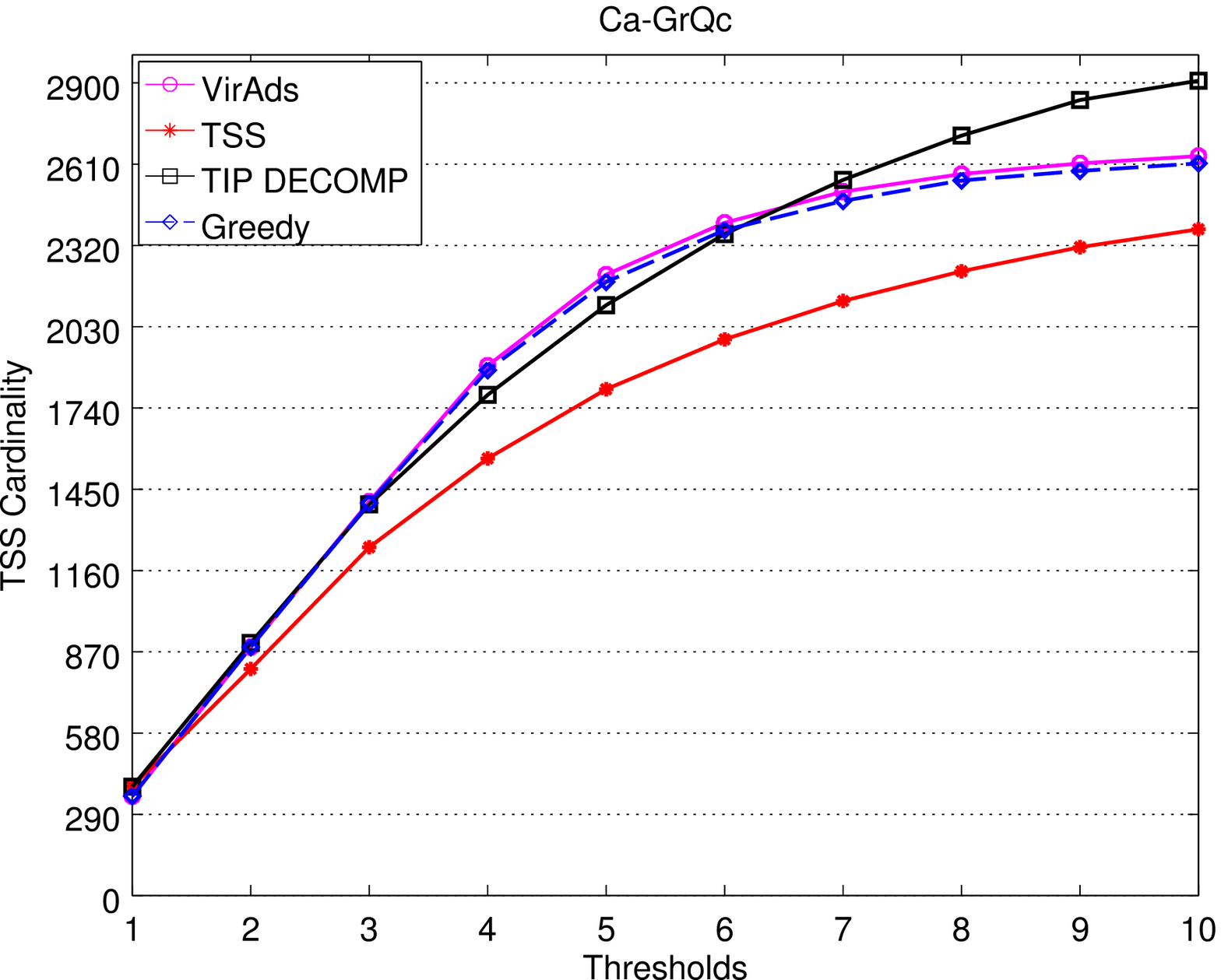}
\caption{ CA-GR-QC \cite{snap}. \label{fig13} }
\end{center}
\end{figure}	

\begin{figure}[ht!]
\begin{center}
\includegraphics[width=10truecm]{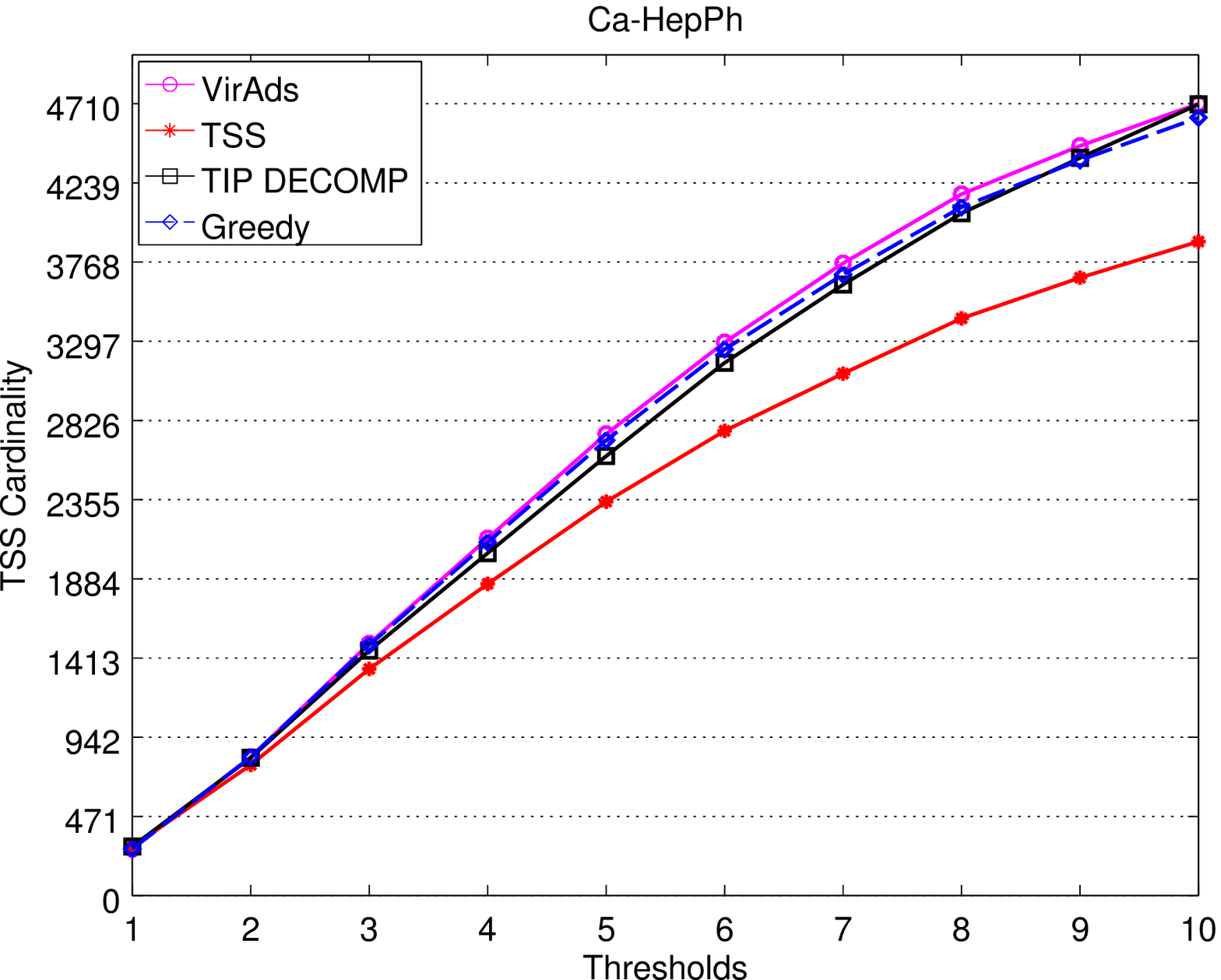}
\caption{ CA-HepPh \cite{snap}.  \label{fig6} }
\end{center}
\end{figure}

\begin{figure}[ht!]
\begin{center}
\includegraphics[width=10.2truecm]{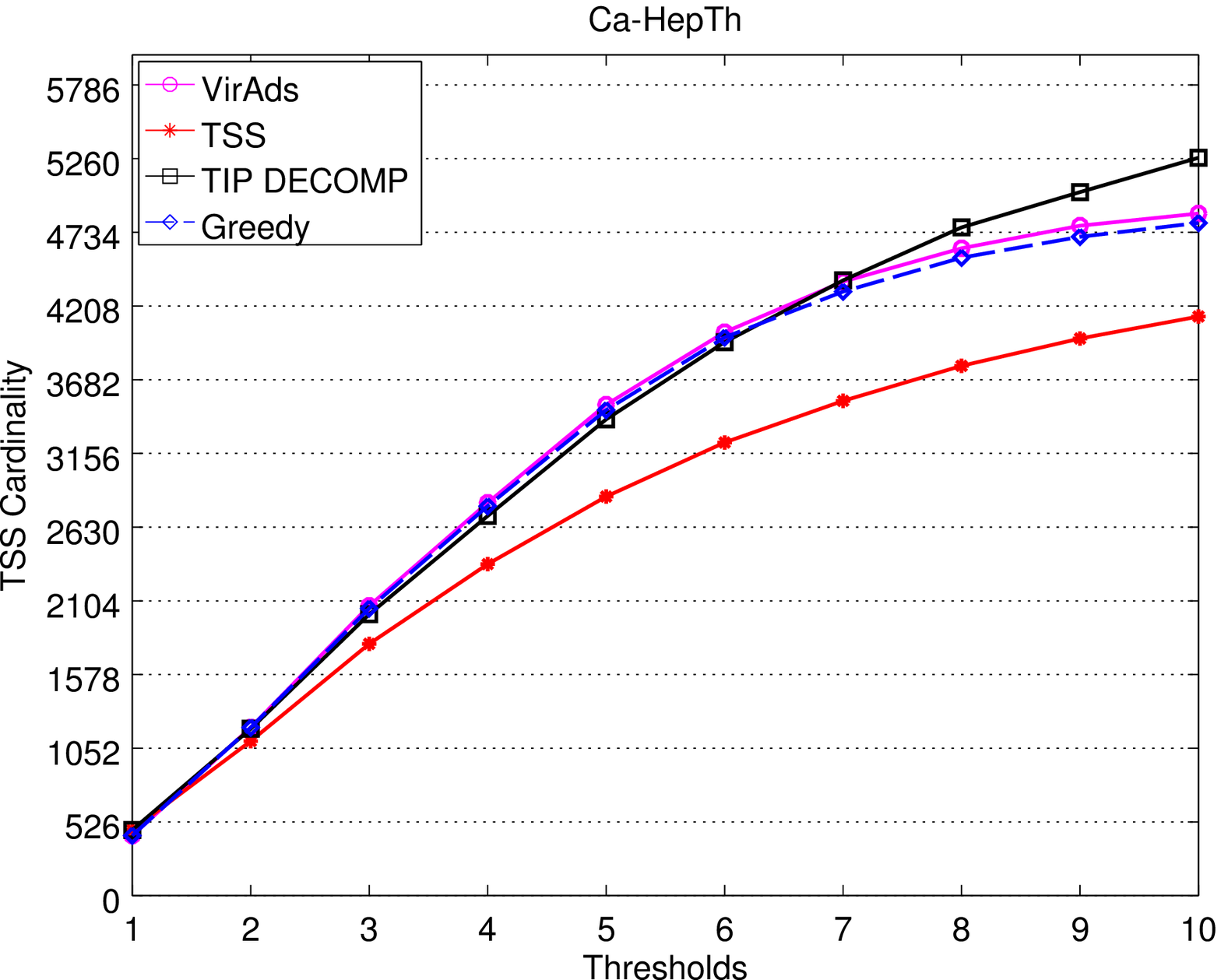}
\caption{Ca-HepTh \cite{snap}. \label{fig14} }
\end{center}
\end{figure}	

	\begin{figure}[ht!]
\begin{center}
\includegraphics[width=10.2truecm]{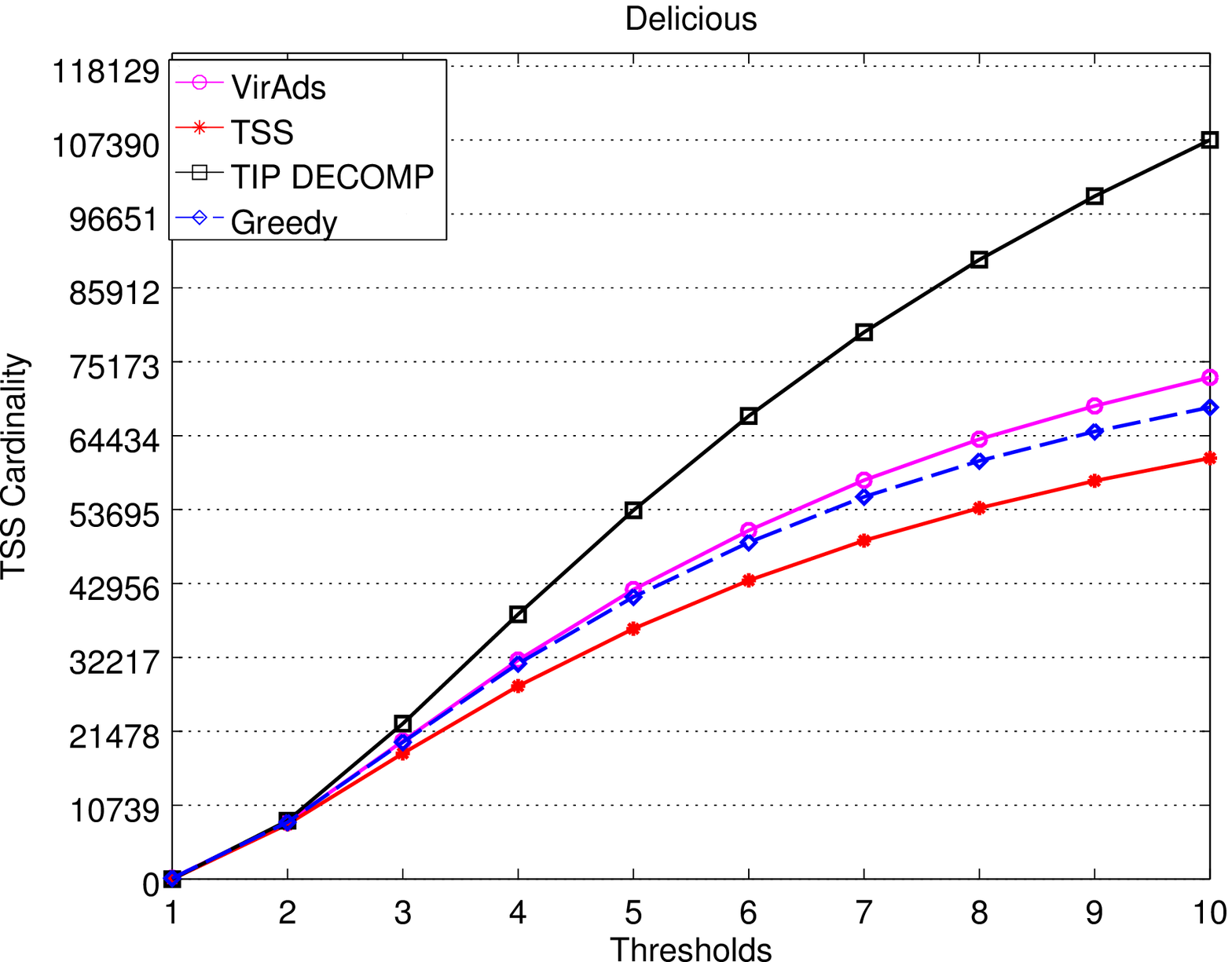}
\caption{Delicious \cite{ZL09}.  \label{fig7} }
\end{center}
\end{figure}

\begin{figure}[ht!]
\begin{center}
\includegraphics[width=10.2truecm]{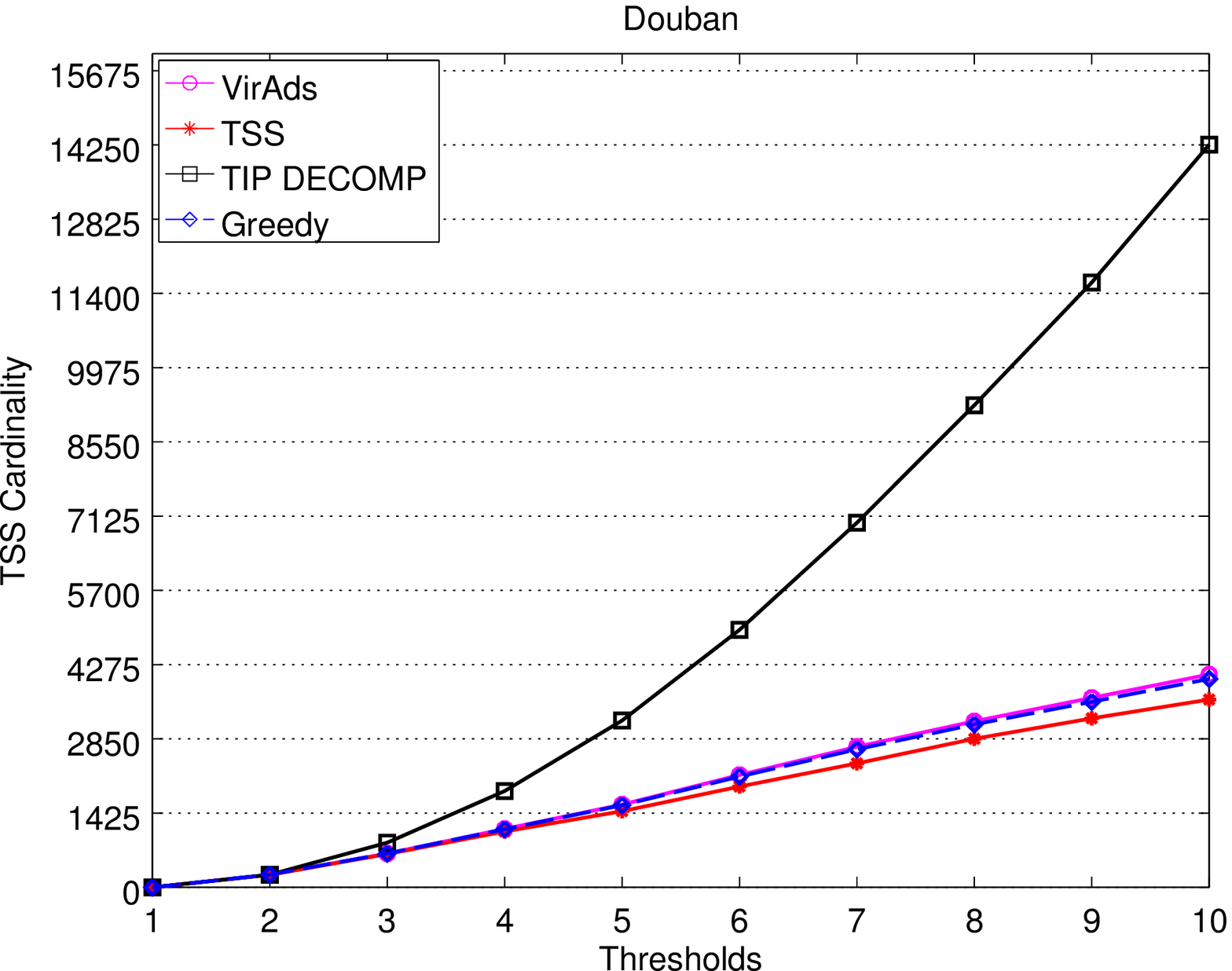}
\caption{Douban \cite{ZL09}.\label{fig15} }
\end{center}
\end{figure}	

\begin{figure}[ht!]
\begin{center}
\includegraphics[width=10truecm]{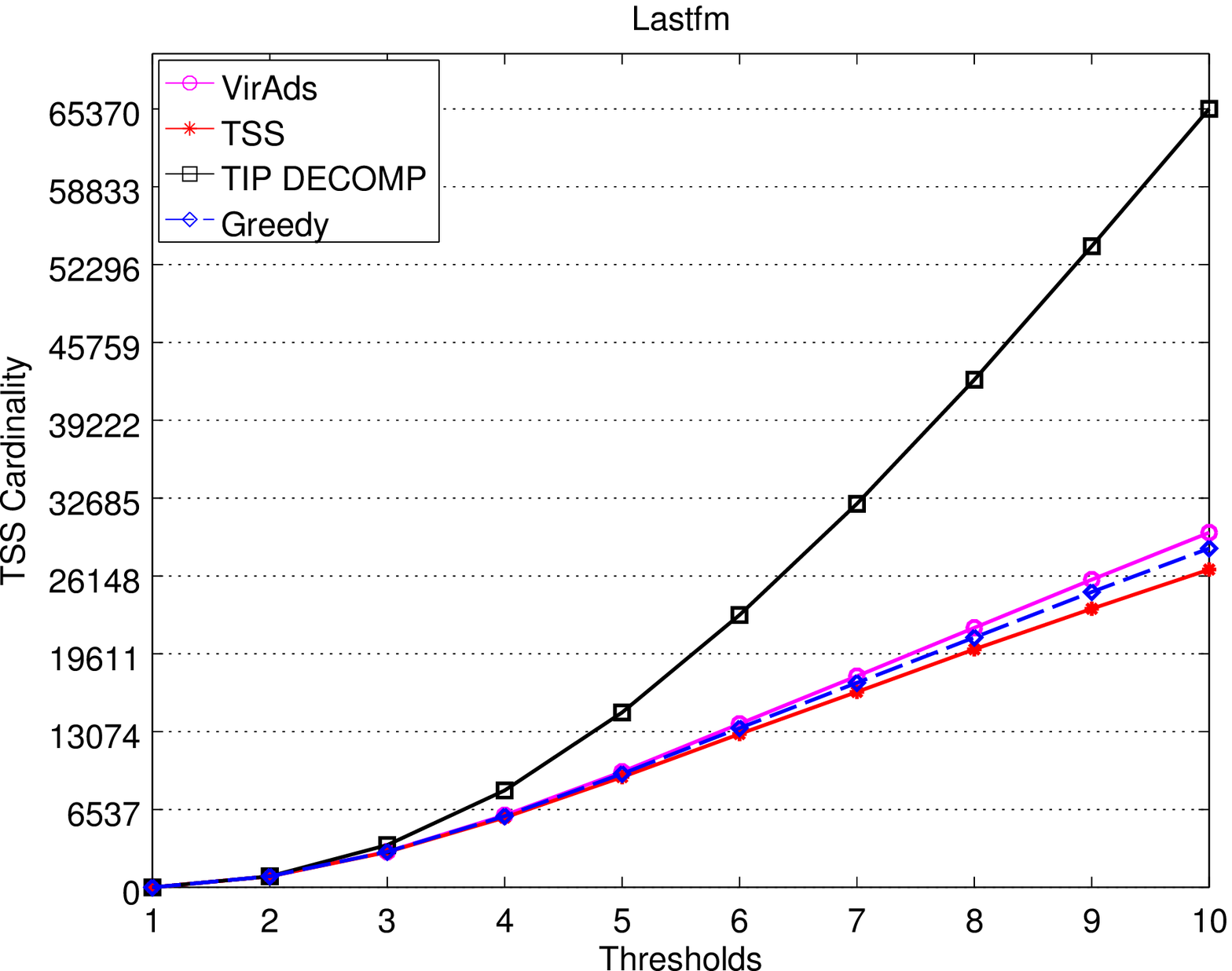}
\caption{Lastfm \cite{ZL09}.\label{fig17} }
\end{center}
\end{figure}

\begin{figure}[ht!]
\begin{center}
\includegraphics[width=10truecm]{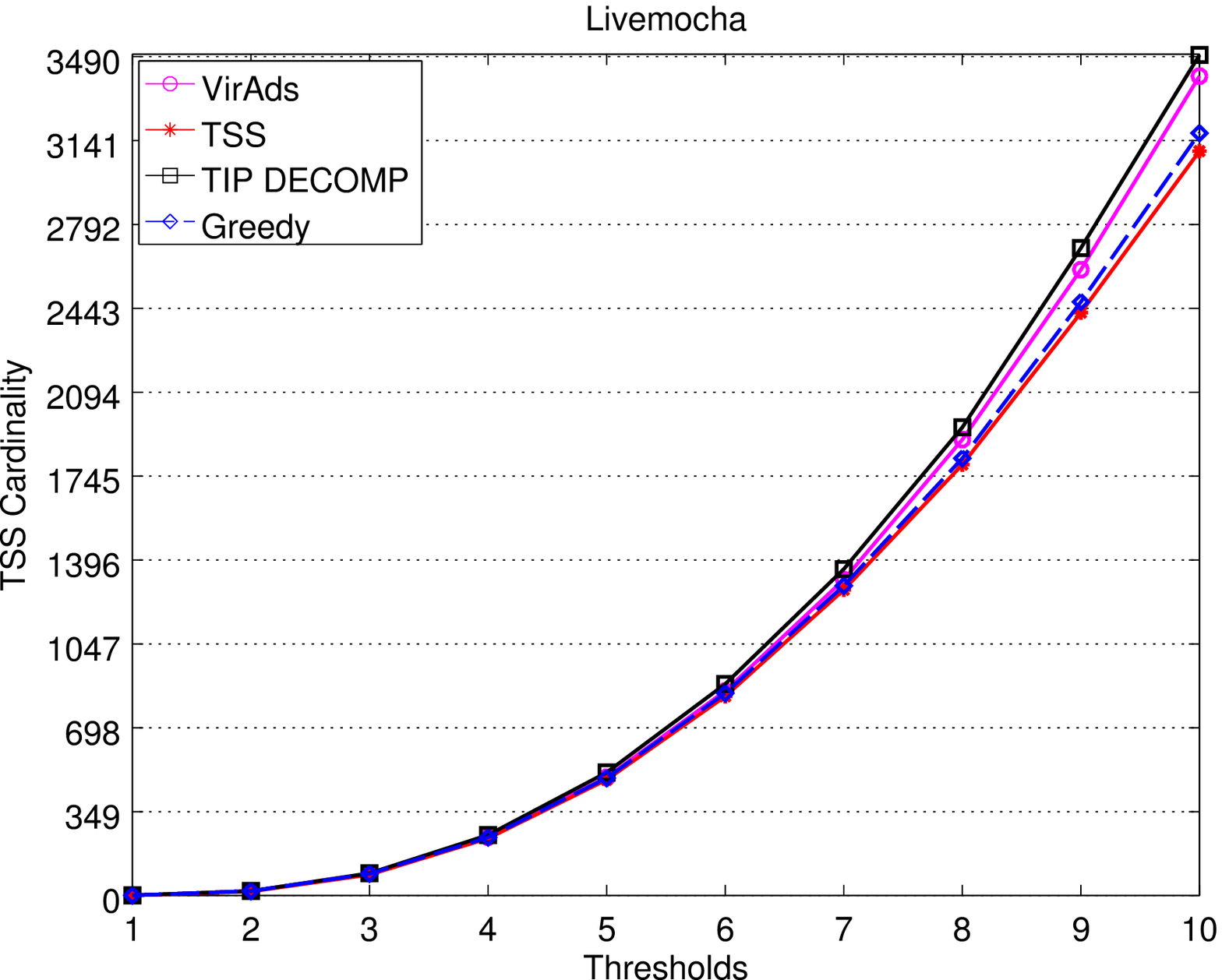}
\caption{Livemocha \cite{ZL09}.\label{fig18} }
\end{center}
\end{figure}
		
\begin{figure}[ht!]
\begin{center}
\includegraphics[width=9.7truecm]{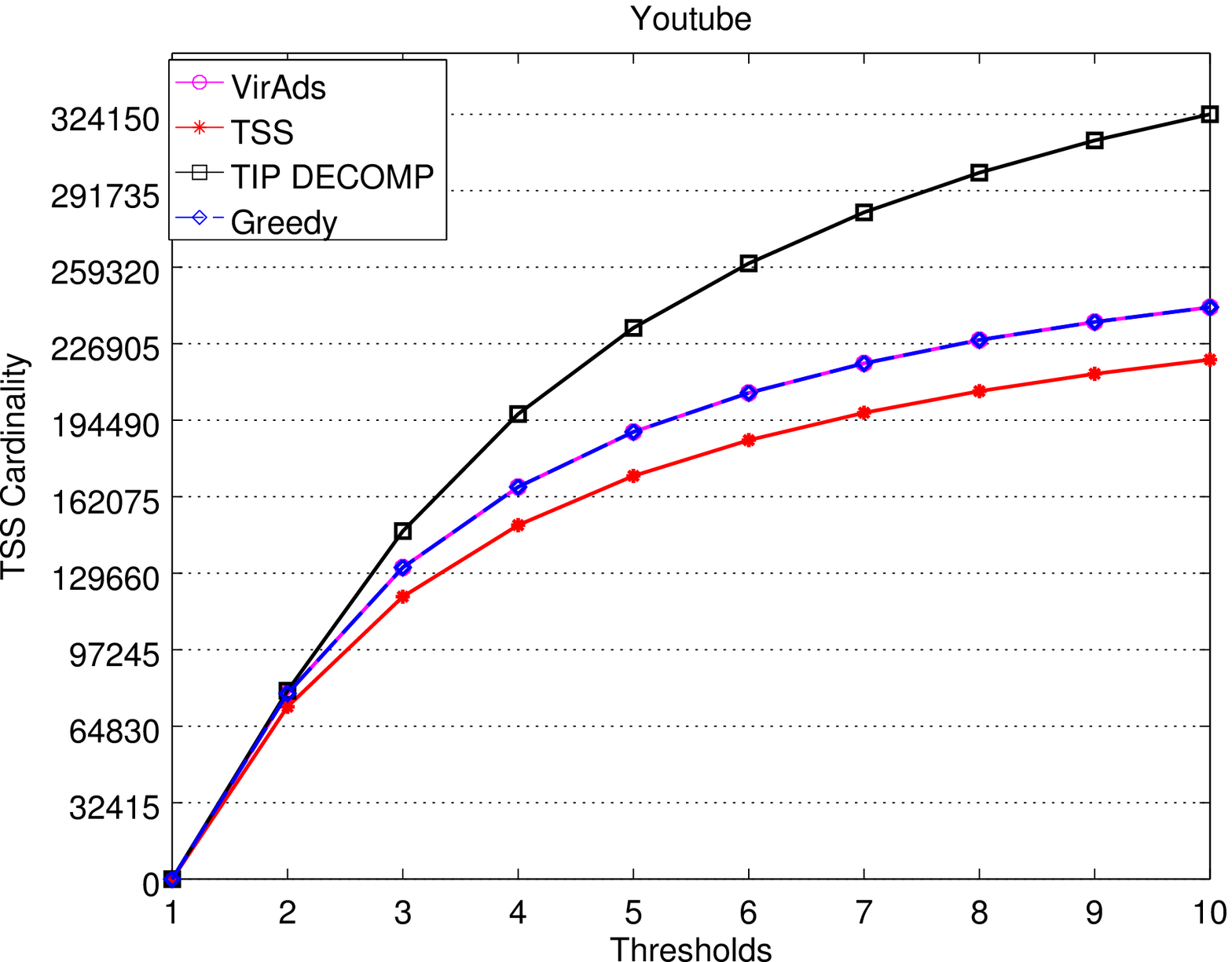}
\caption{YouTube2 \cite{snap}.\label{fig19} }
\end{center}
\end{figure}

\newpage

  \section{Concluding Remarks}
We  presented a simple algorithm to find small
sets of nodes that influence a whole network, 
where the dynamic that
governs the spread of influence in the network
is given in Definition 1.
In spite of its simplicity, 
our algorithm is optimal for several
classes of graphs, it {improves on } 
the general upper bound 
given in \cite{ABW-10} on the cardinality of a minimal influencing set, and outperforms, on real
life networks, the performances of known heuristics for
the same problem. There are many possible ways of extending our work.
We would be especially interested in discovering
additional interesting  classes of graphs for
which our algorithm is optimal (we conjecture that this is indeed the case).

\nocite{Cic+,Cic14,CGM+,CordascoGR15,CGRV16,CordascoGRV16,CGRV2015,Ga+,G,SNAM}
\bibliography{TSS-doppiasoglia}{}

\end{document}